\theoremstyle{plain}
\newtheorem{theorem}{Theorem}[section]
\newtheorem*{claim}{Claim}
\newtheorem{corollary}[theorem]{Corollary}
\theoremstyle{definition}
\newtheorem{definition}[theorem]{Definition}
\newtheorem*{remark}{Remark}
\newcommand{\fix}[1]{\textbf{\textcolor{red}{#1}}}
\newcommand{\prob}{\text{Pr}}
\newcommand{\eps}{\epsilon}
\newcommand{\ev}{\mathbb{E}}
\theoremstyle{plain}
\theoremstyle{definition}
\begin{document}
	
\pagenumbering{gobble}
\begin{titlepage}	
	{ % Title
		\title{The Role of A-priori Information in Networks of Rational Agents
			{\let\thefootnote\relax\footnote{
					This work is partially supported by a grant from the
					Blavatnik Cyber Security Council,
					the Blavatnik Family Computer Science Research Fund
					and the KIPP Foundation.
			}}
		}
		\author{Yehuda Afek}
		\author{Yishay Mansour}
		\author{Shaked Rafaeli}
		\author{Moshe Sulamy}
		\affil{Tel-Aviv University}
		\date{}
		\maketitle
	}
	\begin{abstract}

Until now, distributed algorithms for rational agents have assumed a-priori knowledge of $n$,
the size of the network.
This assumption is challenged here by proving how much a-priori knowledge is necessary for equilibrium in different distributed computing problems.
Duplication - pretending to be more than one agent - is the main tool used by agents to deviate and increase their utility when not enough knowledge about
$n$ is given.
The a-priori knowledge of $n$ is formalized as a Bayesian setting where at the beginning of the algorithm agents only know a prior $\sigma$, a distribution from which they know $n$ originates.

We begin by providing new algorithms for the Knowledge Sharing and Coloring problems when $n$ is a-priori known to all agents.  
We then prove that when agents have no a-priori knowledge of $n$, i.e.,
	the support for $\sigma$ is infinite,
	equilibrium is impossible for the Knowledge Sharing problem.
Finally, we consider priors with finite support and
find bounds on the necessary interval $[\alpha,\beta]$ that contains the support of $\sigma$,
i.e., $\alpha \leq n \leq \beta$,
for which we have an equilibrium.
When possible, we extend these bounds to hold for \emph{any} possible protocol.

\footnotetext[1]{
	This paper is the full version of the DISC 2018 \cite{afek_et_al:LIPIcs:2018:9794} paper.}

	\end{abstract}

	\hfill
\end{titlepage}
\pagenumbering{arabic}

\section{Introduction}
\label{section_intro}

The complexity and simplicity of most distributed computing problems
depend on the inherent a-priori knowledge given to all participants.
Usually, the more information processors in a network start with,
% about the network and the nodes
%is initially given to all the nodes,
the more efficient and simple the algorithm for a problem is.
Sometimes, this information renders an otherwise unsolvable problem, solvable.

In game-theoretic distributed computing, algorithms run in a network of \emph{rational agents} that may deviate from an algorithm if they deem the deviation more profitable for them.
Rational agents have always been assumed to know the number of participants in the network \cite{Abraham:2011,DISC13/ADH,Afek:2014:DCB:2611462.2611481,
	Halpern:2016:RCE:2933057.2933088,YifrachLeader},
when in fact this assumption is not only unrealistic in today's internet, but also provides agents with
non-trivial information which is critical for equilibrium.

Consider for example
a large world-wide social network on which a distributed algorithm
between a large portion of its members is run.
It does not necessarily have the time to verify the number of participants,
or the service it provides with the algorithm will be irrelevantly slow. If $n$ is known to all participants, as was assumed in previous works about rational agents, that would not be a problem. However, what if $n$ is not known beforehand, allowing one of the participants to skew the result in his favor?

The problems we examine here can be solved in the game-theoretic setting when $n$ is a-priori known.
However, learning the size of the network reliably is not possible with rational agents and thus we show that some a-priori knowledge of $n$ is critical for equilibrium.
That is, without any knowledge of $n$, equilibrium for some problems is impossible.
In contrast, these problems can be solved without knowledge of $n$
if the participants are not rational,
since we can acquire the size of the network using broadcast and echo.

When $n$ is not a-priori known, 
agents may deviate from the algorithm by \emph{duplicating} themselves to affect the outcome.
This deviation is also known as a Sybil Attack \cite{Douceur:2002:SA:646334.687813},
commonly used to manipulate internet polls,
increase page rankings in Google \cite{Bianchini:2005:IP:1052934.1052938}
and affect reputation systems such as eBay \cite{Bhattacharjee:2005:ABS:1080192.1080203, Cheng:2005:SRM:1080192.1080202}.  In this paper, we use a Sybil Attack as a method for agents to skew protocols in their favor and increase their utility.
For each problem presented,
an equilibrium when $n$ is known is provided or was provided in a previous work,
thus for these problems, agents that do not duplicate cannot increase their utility.
%thus in these cases deviations that do not include duplication cannot benefit the agents.
Obviously, deviations from the algorithm that include both duplicating and additional cheating are also possible.

Intuitively, the more agents an agent is disguised as,
the more power to affect the output of the algorithm it has.
For every problem, we strive to find the maximum number of duplications a cheater may be allowed to duplicate without gaining an advantage when compared to following the protocol legally, i.e., the maximum number of duplications for which equilibrium is still possible.
This maximum number of duplications depends on whether other agents will detect that a duplication has taken place,
since the network could not possibly be this large.
To detect this situation they need to possess some knowledge about the network size, or about a specific structure.

Agents may possess partial a-priori knowledge of $n$, i.e., they do not know $n$ precisely but instead have a prior belief on the possible values of $n$. We formalize this notion by using a Bayesian setting in which agents a-priori know that $n \sim \sigma$ for some discrete distribution $\sigma$ over $\mathbb{N}$. 
We use this Bayesian setting to determine the requirements necessary of agents' a-priori knowledge of $n$ for equilibria to be possible. More specifically, we prove impossibility of equilibria in some settings, and find bounds on $\sigma$ such that if agents a-priori know that $n$ originates from $\sigma$, i.e., $n \sim \sigma$, an equilibrium exists. In the problems we examine, the important characteristic of $\sigma$ is its support: whether the support is finite and the size of the interval $[\alpha,\beta]$ that contains the support of $\sigma$ from which $n$ is drawn.
These bounds hold for both deterministic and non-deterministic algorithms.

Using these bounds, we show what algorithms may be used in specific networks.
For example, in an internal business network,
some algorithms may work because every member in the network knows
there are no more than several thousand computers in the network,
while for other algorithms this knowledge is not tight enough.
%\fix{There are algorithms for which \emph{any} bound is enough for equilibrium to be possible,
%called $\infty$-bound,
%while there are also algorithms which can be solved given \emph{no} bounds whatsoever,
%called \emph{unbounded}.}
%\new{\textbf{I think the red sentence above can be removed, agree?}}
%\fix{For some problems, \emph{any} bound is enough for equilibrium to be possible,
%called $\infty$-bound,
%while some problems can even be solved given \emph{no} bounds whatsoever,
%called \emph{Unbounded}.}
%, can be solved when any bound on the network size is given,
%e.g., $10^{20}$ in present day Internet}.
%\new{\textbf{I think the last (red) sentence should be removed, it doesn't add motivation}}

Table \ref{tbl:results} summarizes our contributions and related previous work (where there is a citation).
In every row, a different distributed computing problem is examined in different settings of a-priori knowledge.
Known $n$ refers to the case where all agents in the network start the algorithm when they \emph{know} $n$. 
Unknown $n$ refers to the Bayesian environment where agents know a prior $\sigma$ from which $n$ is drawn, but $\sigma$ is a distribution with infinite support, namely, there is some $n_0$ such that for any $m > n_0$, $\prob[n = m] > 0$. 
The two rightmost columns refer to two types of priors with finite support: A uniform distribution on integers in an interval $[\alpha,\beta]$, and a geometrically decreasing distribution on integers in an interval $[\alpha,\beta]$.

\begin{table}[ht]
	\centering
	\begin{tabular}{|c|c|c|c|c|}
		\hline
		Problem & Known $n$ & Unknown $n$ & uniform $\sigma \subseteq [\alpha,\beta]$  & geometric $\sigma \subseteq [\alpha,\beta]$ \\ \hline
		Coloring
		& \shortstack{ \checkmark \\ Section~\ref{section_alg}}
		& \shortstack{Open Question}
		& \shortstack{Equilibrium \\ for any $\alpha,\beta$ *}
		& \shortstack{Equilibrium \\ when $\beta \leq  2\alpha - 1 $* }
		\\ \hline
		\shortstack{Leader Election\\\space}
		& \shortstack{\checkmark \\ ADH'13 \cite{DISC13/ADH}}
		& \shortstack{Impossible \\ ADH'13 \cite{DISC13/ADH}}
		& \shortstack{Equilibrium iff \\
					  $\beta \leq \alpha + 1$}
		& \shortstack{Equilibrium iff \\
			$\beta = \alpha$}
		\\ \hline
		Knowledge Sharing
		& \multirow{2}{*}{\shortstack{\checkmark\\ AGLS'14 \cite{Afek:2014:DCB:2611462.2611481}}}
		& \multirow{2}{*}{\shortstack{Impossible \\ Section~\ref{section_imp}}}
		& \shortstack{ Equilibrium when \\ $\beta \leq 2\alpha - 1$}
		& \shortstack{ Equilibrium when \\ $\beta \leq 2\alpha - 1$}
		\\ \cline{1-1} \cline{5-5} \cline{4-4}
		$2$-Knowledge Sharing
		&
		&
		& \shortstack{ Equilibrium \\ For any $\alpha,\beta$}
		& \shortstack{ Equilibrium \\ when $\beta \leq 2\alpha - 1$}
		\\ \hline
		Partition, Orientation
		& \shortstack{ \checkmark \\ Section~\ref{section_approx}}
		&  \shortstack{ \checkmark \\ Section~\ref{section_approx}}
		& \shortstack{$\forall \sigma$ an equilibrium \\ exists}
		& \shortstack{$\forall \sigma$ an equilibrium \\ exists}
		\\ \hline
	\end{tabular}
		\caption{Summary of paper contributions,
		equilibria and impossibility results for different problems with different a-priori knowledge about $n$
		\\\textbf{*} Results hold for ring networks
		\\\textbf{**} See Section~\ref{section_approx} for full analysis for Knowledge Sharing}
	\label{tbl:results}
\end{table}

%%%%%%%%%%%%%%%%%%%%%%%%%%%%%%%%%%%%%%%%%%%%%%%%%%%%

\subsection{Related Work}

The connection between distributed computing and game theory
stemmed from the problem of secret sharing \cite{CACM/Shamir79}.
Further works continued the research on secret sharing and multiparty computation
when both Byzantine and rational agents are present
\cite{PODC/AbrahamDGH06, PODC/DaniMRS11, TCC/FuchsbauerKN10, SCN/GordonK06,  ICALP/GroceKTZ12, CYPTO/LysyanskayaT06}. 

Another line of research presented the BAR model (Byzantine, acquiescent and rational) \cite{SOSP/AiyerACDMP05, PODC/MoscibrodaSW06, OPODIS/WongLACD11},
while a related line of research discusses converting solutions with a mediator to
cheap talk \cite{PODC/AbrahamDGH06, TCC/AbrahamDH08, Barany1992, JET/Ben-Porath03,  CRYPTO/DodisHR00,
	PODC/LepinskiMP04, TARK/McGrewPS03, TCS/Shoham05, ECON/UAV02, ET/UA04}.

Abraham, Dolev, and Halpern \cite{DISC13/ADH} were the first to present
protocols for networks of rational agents,
specifically protocols for Leader Election.
In \cite{Afek:2014:DCB:2611462.2611481} the authors continue this line
of research by providing basic building blocks for game theoretic distributed algorithms,
namely a wake-up and knowledge sharing equilibrium building blocks.
Algorithms for consensus, renaming, and leader election are presented using these building blocks.
Consensus was researched further by Halpern and Vilacça \cite{Halpern:2016:RCE:2933057.2933088},
who showed that there is no ex-post Nash equilibrium, and a Nash equilibrium that tolerates $f$ failures
under some minimal assumptions on the failure pattern.
Yifrach and Mansour \cite{YifrachLeader} studied fair Leader Election protocols,
giving an almost tight resilience analysis.
Bank, Sulamy, and Waserman \cite{gt_idspace} examined the case where the $id$ space is limited,
calculating the minimal threshold for equilibrium.

Coloring and Knowledge Sharing have been studied extensively in a distributed setting
\cite{AttiyaBook,Awerbuch:1989:NDL:1398514.1398717,Cole:1986:DCT:10366.10368,
	Kuhn:2006:CDG:1146381.1146387,Linial1986,Linial:1987:DGA:1382440.1382990,
	Szegedy:1993:LBG:167088.167156}.
An algorithm for Knowledge Sharing  with rational agents was presented in \cite{Afek:2014:DCB:2611462.2611481}, while
Coloring with rational agents has not been studied previously, to the best of our knowledge.

\section{Model}
\label{section_model}
We use the standard message-passing, synchronous model, where the network is a bidirectional graph
$G=\left(V,E\right)$ with $n$ nodes, each node representing an agent with unlimited computational power,
and $|E|$ edges over which the agents communicate in rounds.
$G$ is assumed to be $2$-vertex-connected\footnotemark.
Throughout the entire paper, $n$ always denotes the \emph{actual} number of nodes in the network. In Section~\ref{section_alg} it is the exact size of the network. In Section~\ref{section_imp} and Section~\ref{section_approx}, agents treat it as a random variable drawn from a prior $\sigma$.
\footnotetext{
	This property was shown necessary in \cite{Afek:2014:DCB:2611462.2611481}, since if a bottleneck node exists it can alter any message passing through it.
	Such a deviation cannot be detected since all messages between the sub-graphs this node connects
	must traverse through it.
	This node can then skew the algorithm according to its preferences.
}

Initially, each agent knows its \emph{id} and input,
but not the \emph{id} or input of any other agent. 
The agents in the network have a prior over the information they do not know. For any problem, we demand the set of private information agents have is finite, and so we assume that the prior agents have over other agents' private information is uniform. The exception for this assumption is $n$, the size of the network: agents know $n$ precisely or know $\sigma$, an explicitly stated prior.

%We assume the prior of each agent over any information it does not know is uniformly distributed over all possible values. 
We assume all agents start the protocol together at round $0$,
i.e., all agents wake up at the same time.
If not, we can use the Wake-Up \cite{Afek:2014:DCB:2611462.2611481} building block to relax this assumption.

\subsection{Equilibrium in Distributed Algorithms}
\label{section_legal_correct}

Informally, a distributed algorithm is an equilibrium if no agent at no point in the execution can do better by unilaterally deviating from the algorithm. 
When considering a deviation, an agent assumes all other agents follow the algorithm,
i.e., it is the only agent deviating.
% even if they can do better by not following it. 

%However, other preferences may exist, as was shown in \cite{Afek:2014:DCB:2611462.2611481}, e.g. agents may prefer to send a minimal amount of messages.

Formally, let $o_a$ be the output of agent $a$, let $\Theta$ be the set of all possible output vectors,
and denote the output vector $O =(o_1,\dots,o_n) \in \Theta$, where $O[a]=o_a$.
Let $\Theta_L$ be the set of \emph{legal} output vectors, in which the protocol terminates successfully,
and let $\Theta_E$ be the set of \emph{erroneous} output vectors, such that
$\Theta = \Theta_L \cup \Theta_E$ and $\Theta_L \cap \Theta_E = \varnothing$.

Each agent $a$ has a utility function $u_a: \Theta \rightarrow \mathbb{N}$.
The higher the value assigned by $u_a$ to an output vector,
the better this vector is for $a$.
As in previous works \cite{DISC13/ADH,Afek:2014:DCB:2611462.2611481,YifrachLeader},
the utility function is required to satisfy the \emph{Solution Preference} property
which guarantees that an agent never has an incentive to fail the algorithm.
Otherwise, they would simply be Byzantine faults.
An agent fails the algorithm only when it detects that another agent had deviated.
                                                                                     
\begin{definition}[Solution Preference]                                              
	The utility function $u_a$ of an agent $a$ never assigns a higher utility           
	to an erroneous output than to a legal one, i.e.:                                   
	$$\forall{a,O_L \in \Theta_L,O_E \in \Theta_E}: u_a(O_L) \geq u_a(O_E)$$
\end{definition}

We differentiate the \emph{legal} output vectors, which ensure the output is valid and not erroneous, from the \emph{correct} output vectors, which are output vectors that are a result of a correct execution of the algorithm, i.e., without any deviation.
Solution Preference guarantees agents never prefer an erroneous output. However, they may prefer a \emph{legal} but \emph{incorrect} output.

The Solution Preference property introduces the threat agents face when deviating:
Agents know that if another agent catches them cheating,
it outputs $\bot$ and the algorithm fails.
In other words, the output is erroneous, i.e., in $\Theta_E$.

For simplicity, we assume agents only have preferences over their own output,
i.e., for any $O_1,O_2 \in \Theta_L$ in which $O_1[a]=O_2[a]$, $u_a(O_1) = u_a(O_2)$.
Additionally, each agent $a$ has a \emph{single} preferred output value $p_a$,
and we normalize the utility function values, such that\footnotemark:
\begin{equation}
\label{eq_u_a}
u_a(O)=\begin{cases}
	1 & o_a=p_a$ and $O \in \Theta_L
	\\ 0 & o_a \neq p_a$ or $O \in \Theta_E\end{cases}
\end{equation}

These assumptions are for convenience only and can easily be removed.
Our results hold for \emph{any} utility function that satisfies Solution Preference.

\footnotetext{
	This is the weakest assumption since it still leaves a cheating agent
	with the highest incentive to deviate, while still satisfying Solution Preference.
	A utility assigning a lower value for failure than $o_a \neq p_a$
	would deter a cheating agent from deviating.
	%, making algorithms that are equilibria easier to design.
}

\begin{definition} [Expected Utility]
Let $r$ be a round in a specific execution of an algorithm. Let $a$ be an arbitrary agent. For each possible output vector $O$,
let $x_O(s,r)$ be the probability, estimated by agent $a$ at round $r$,
that $O$ is output by the algorithm if $a$ takes step $s$~\footnotemark,
and all other agents follow the algorithm.
The \emph{Expected Utility} $a$ estimates for step
$s$ in round $r$ of that specific execution is:
$$ \mathbb{E}_{s,r} [u_a]=\sum\limits_{O \in \Theta} x_O(s,r) \cdot u_a(O) $$
\end{definition}

We can also consider the expected utility from following a \emph{strategy}. Let $S_a$ denote a series of steps for agent $a$ beginning at round $r$ until the end of the protocol, i.e., a list of actions $a$ performs every round from $r$ and on. These may specify different reactions to situations during the course of the protocol. Then we can consider all possible outputs where $a$ acts according to $S_a$ and other agents act according to the protocol, and denote the expected utility over all these outputs as $\ev_{S_a,r}[u_a]$.

\footnotetext{
	A step specifies the entire operation of the agent in a round.
	This may include drawing a random number, performing any internal computation,
	and the contents and timing of any message delivery.
}

An agent will deviate whenever the deviating step has a strictly
higher expected utility than the expected utility of the next step of the algorithm,
even if that deviating step also increases the risk of an erroneous output.

Let $\Lambda$ be an algorithm.
If by deviating from $\Lambda$ and taking step $s$,
the expected utility of $a$ is higher, we say that agent $a$ has an \emph{incentive to deviate} (i.e., cheat).
For example, at round $r$ algorithm $\Lambda$ may dictate that $a$ flips a fair binary coin
and sends the result to all of its neighbors.
Any other action by $a$ is considered a \emph{deviation}:
whether the message was not sent to all neighbors,
sent later than it should have, or whether the coin toss was not fair,
e.g., $a$ only sends $0$ instead of a random value. 
If no agent can unilaterally increase its expected utility by deviating from $\Lambda$,
we say that the protocol is an \emph{equilibrium}.
Equilibrium is defined over a \emph{single} deviating agent,
i.e., there are no coalitions of agents.

\begin{definition}[Distributed Equilibrium]
	Let $s(r)$ denote the next step of algorithm $\Lambda$ in round $r$.
	$\Lambda$ is an equilibrium if for any deviating step $\bar{s}$,
	at any round $r$ of every possible execution of $\Lambda$ where all steps by all agents up to round $r$ were according to $\Lambda$:
	$$\forall{a,r,\bar{s}}: \mathbb{E}_{s(r),r}[u_a] \geq \mathbb{E}_{\bar{s},r}[u_a]$$
\end{definition}

Distributed Equilibrium is a sort of truthful equilibrium: a protocol where if agents assume all other agents follow the protocol truthfully, their best action is to truthfully follow the protocol as well.

It is important to emphasize that for any non-trivial distributed algorithm,
the outcome cannot be calculated using only private data, without communication. 
For rational agents, no agent can calculate the output privately at the beginning of the algorithm,
since if it could calculate the output and know that its resulting utility will be $0$,
it would surely lie over its initial information to avoid losing, preventing equilibrium.
If it knows its resulting utility is $1$,
it has no incentive to cheat. But then there exists an instance of the same problem where the agents has a different preference over the output, making that protocol not an equilibrium.
Formally, 
for any agent $a$ and any step $s$ of the agent that does not necessarily result in algorithm failure,
it must hold that: $\mathbb{E}_{s,0}[u_a] \notin \{0,1\}$
(a value of $0$ means an agent will surely not get its preference,
and $1$ means it is guaranteed to get its preference).
%and no agent knows at the beginning if it's going to "win" or not.

\subsection{Priors on $n$}
In Section~\ref{section_imp} and Section~\ref{section_approx} agents are in a Bayesian environment where they do not know $n$, the size of the network, but instead know a prior $\sigma$ over $\mathbb{N}$ from which $n$ originates. The support of $\sigma$ is:
\[
	supp(\sigma) = \{x | \prob[n=x] > 0\} 
\]

\begin{definition} [Infinite Support]
	A prior $\sigma$ agents have on $n$ has \emph{infinite support} if there is some network size $n_0$ that any network size larger than $n_0$ is possible with non-zero probability. Formally, 
	a prior $\sigma$ on $n$ has an infinite support if there exists $n_0 \in \mathbb{N}$ s.t.:
	\[
		\forall x > n_0 : \prob[n=x] > 0
	\]
\end{definition}

\begin{definition} [Finite support]
	A prior $\sigma$ has \emph{finite support in $[\alpha,\beta]$} if there exists an interval $[\alpha,\beta]$ over $\mathbb{N}$ for which $supp(\sigma) \subseteq [\alpha,\beta]$. Denoted in short $\sigma \subseteq [\alpha,\beta]$.
\end{definition}

\begin{definition}[Uniform Prior]
	A uniform distribution over integers in an interval $[\alpha,\beta]$ is defined using the following probability mass function:
	\[
		F(t) = \prob[n \leq t ] = \frac{t-\alpha+1}{\beta-\alpha+1}
	\]
\end{definition}

\begin{definition}[Geometric Prior]
	As a decreasing geometric distribution we use a factor $\frac{1}{2}$ geometric distribution starting at $\alpha$ and decreasing until $\beta$.
	Formally, let $n$ be a random variable drawn from $\sigma$. $\sigma$ is a geometric $\frac{1}{2}$ distribution starting at $\alpha$ until $\beta$, where the tail above $\beta$ is spread uniformly along $[\alpha,\beta]$. The "tail" is:
	\[
	\prob[n > \beta] = 1 - \prob[n \leq \beta] = 1 - (1-(\frac{1}{2})^{\beta-\alpha+1})=2^{\alpha-1-\beta}
	\]
	Thus for any $t \geq \alpha$:
	\[
	f(t) = \prob[n = t ] = 2^{-t+\alpha-1} + \frac{2^{\alpha - \beta-1}}{\beta - \alpha + 1} = 2^{\alpha-t-1} + c
	\]
	\[
	F(t) = \prob[n \leq t] = 
	\sum_{j = \alpha}^{t}  (  2^{\alpha-j-1} 
	+ c )
	= \big( 1-2^{\alpha-t-1} \big) + (t-\alpha +1)c
	\]
	
	Where $c =  \frac{2^{\alpha - \beta-1}}{\beta - \alpha + 1}$.
\end{definition}

\subsection{Knowledge Sharing}
The Knowledge Sharing problem (adapted from \cite{Afek:2014:DCB:2611462.2611481}) is defined as follows:
\begin{enumerate}
	
	\item Each agent $a$ has a private input $i_a$, in addition to its $id$,
	and a function $q$, where $q$ is identical at all agents.  Agents know the possible output space of $q$ before the algorithm begins.
	
	\item A Knowledge Sharing protocol terminates \emph{legally} if all agents
	output the \emph{same} value, i.e., $\forall{a,b}: o_a=o_b \neq \bot$. 
	Thus the set $\Theta_L$ is defined as:
	$ O \in \Theta_L \iff \forall{a,b}: O(a) = O(b) \neq \bot$.
	
	\item A Knowledge Sharing protocol terminates \emph{correctly} (as described in Section~\ref{section_legal_correct}) if
	each agent outputs the value $q(I)$
	over the input values $I=\{i_1,\dots,i_n\}$ of all other agents\footnotemark.
	\footnotetext{Notice that any output is legal as long as it is the output of all agents,
	but only a single output value is considered \emph{correct} for a given input vector.}

	\item The function $q$ satisfies the Full Knowledge property:
	\begin{definition}[Full Knowledge Property]
		\label{full_knowledge}
		A function $q$ fulfills the \emph{full knowledge} property if, 
		for each agent that does not know the input values of all other agents,
		any output in the range of $q$ is \emph{equally} possible.
		Formally, for any $1 \leq j \leq m$, fix $(x_1,\dots,x_{j-1},x_{j+1},\dots,x_m)$
		and denote $z_y = |\{x_j | q(x_1,\dots,x_j,\dots,x_m)=y\}|$.
		A function $q$ fulfills the \emph{full knowledge} property if, for any possible output $y$ in the range of $q$,
		$z_y$ is the same\footnotemark.
		
		\footnotetext{The definition assumes input values are drawn uniformly, otherwise the definition
			of $z_y$ can be expanded to the sum of probabilities over every input value for $x_j$.}
	\end{definition}
\end{enumerate}

 We differentiate two variants of Knowledge Sharing: 
	\begin{itemize}
		\item $k$-Knowledge Sharing - where $|range(q)| = k$ and $k$ is known to all agents at the beginning of the algorithm. For example, in $2$-Knowledge Sharing the output space is $\{0,1\}$ and all agents know the possible outputs of the protocol are from $\{0,1\}$. If every agent's input is a random bit, a $2$-Knowledge Sharing protocol performs a shared coin flip.
		\item Knowledge Sharing - A protocol for Knowledge Sharing is a protocol that solves $k$-Knowledge Sharing for any possible $k$. In the same manner, a protocol is an \emph{equilibrium} if for any possible $k \in \mathbb{N}$, no agent has an incentive to cheat.
	\end{itemize}

We assume that each agent $a$ prefers a certain output value $p_a$.
%by all the agents.

\subsection{Coloring}
In the Coloring problem \cite{Cole:1986:DCT:10366.10368,Linial1986},
$\Theta_L$ is any $O$ such that
$\forall{a} : o_a \neq \bot$ and  $\forall{(a,b) \in E} : o_a \neq o_b$.
We assume that every agent $a$ prefers a specific color $p_a$.

\subsection{Sybil Attacks}

Let $a$ be a malicious agent with $\delta$ outgoing edges.
A possible deviation for $a$ is to simulate imaginary agents $a_1$, $a_2$
and to answer over some of its edges as $a_1$, and over the others as $a_2$,
as illustrated in Figure~\ref{figure:duplicated_node}.
From this point on $a$ acts as if it is 2 agents. 
We assume that the \emph{id}~space is much larger than $n$, 
allowing us to disregard the probability that the fake \emph{id} collides with an existing \emph{id}, an issue dealt with in \cite{gt_idspace}.

\begin{figure}[H]
	\caption{Agent $a$ acting as separate agents $a_1$, $a_2$}	
	\label {figure:duplicated_node}
	\captionsetup{justification=centering}
	\includegraphics[scale=0.4]{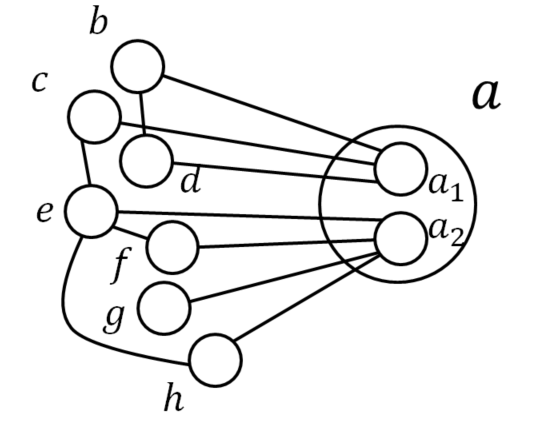}
	\centering
\end{figure}

Regarding the output vector, notice that an agent that pretends to be more than one agent
still outputs a \emph{single} output at the end.
The duplication causes agents to execute the algorithm as if it is executed on a graph $G'$ (with the duplicated agents) instead of the original graph $G$;
however, the output is considered legal if $O = (o_a,o_b,\dots) \in \Theta_L$
rather than if $(o_{a_1},o_{a_2},o_b,\dots) \in \Theta_L$.

\section{Algorithms}
\label{section_alg}

Here we present algorithms for Knowledge Sharing (Section~\ref{section:alg:ks})
and Coloring (Section~\ref{section:alg:coloring}).

The Knowledge Sharing algorithm presented here is an equilibrium in a ring network
when no cheating agent pretends to be more than $n$ agents, improving the Knowledge Sharing algorithm in \cite{Afek:2014:DCB:2611462.2611481}.
The Coloring algorithms are equilibria in any $2$-connected graph
when agents a-priori know $n$.

%%%%%%%%%%%%%%%%%%%%%%%%%%%%%%%%%%%%%%%%%%%%%%%%%%%%%%%%%%%%%%%%%%%%%%%%%%%%%%%%%%%%%%%
%%% show lower bound X and match it with a tight upper bound %%%%%%%%%%%%%%%%%%%%%%%%%%
%%%%%%%%%%%%%%%%%%%%%%%%%%%%%%%%%%%%%%%%%%%%%%%%%%%%%%%%%%%%%%%%%%%%%%%%%%%%%%%%%%%%%%%

Notice that using an algorithm as a subroutine is not trivial in this setting,
even if the algorithm is an equilibrium,
as the new context as a subroutine may allow agents to deviate
towards a different objective than was originally proven.
Thus, whenever a subroutine is used, the fact that it is an equilibrium should be proven.

\subsection{Knowledge Sharing in a Ring}
\label{section:alg:ks}

First we describe the \texttt{Secret-Transmit($i_a$,$r$,$b$)} building block in which
agent $b$ learns the input $i_a$ of agent $a$ at round $r$,
and no other agent in the ring learns any information about this input.
To achieve this, agent $a$ selects a random number $R_a$,
and let $X_a=R_a \oplus i_a$.
It sends $R_a$ clockwise and $X_a$ counter-clockwise until each reaches the agent before $b$.
At round $r-1$, these neighbors of $b$ simultaneously send $b$ the values $X_a$ and $R_a$,
thus $b$ receives the information at round $r$.

\begin{figure}[H]
	\caption{Example of \texttt{Secret-Transmit$(i_{a_1}, r, a_3)$}}	
	\label {figure:secrettransmitexample}
	\captionsetup{justification=centering}
	\includegraphics[scale=0.35]{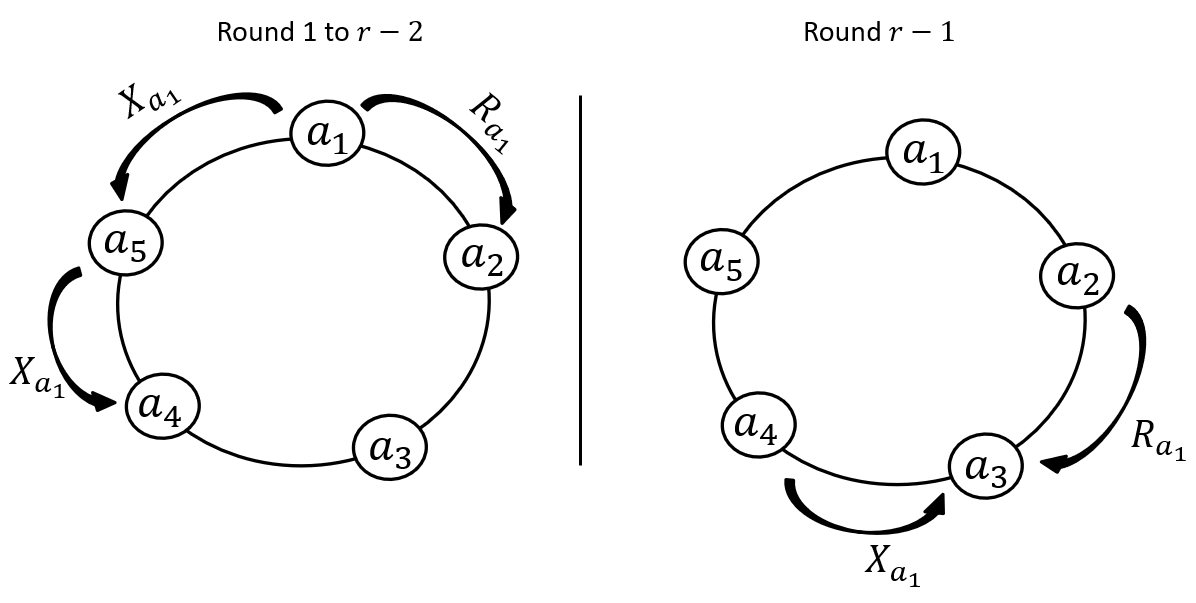}
	\centering
\end{figure}

We assume a global orientation around the ring.
This assumption can easily be relaxed via Leader Election \cite{Afek:2014:DCB:2611462.2611481},
which is an equilibrium in this application since the orientation
has no effect on the output.
The algorithm works as follows:

\begin{algorithm}[H]
	\caption{Knowledge Sharing in a Ring}
	\label{alg_ks}
	\begin{algorithmic}[1]
		\item All agents execute Wake-Up \cite{Afek:2014:DCB:2611462.2611481} to learn the ids
		of all agents and $n'$, the size of the ring (which may include duplications)
		\State For each agent $a$, denote $b^1_a$ the clockwise neighbor of $a$,
		and $b^2_a$ the agent at distance $\lfloor \frac{n'}{2} \rfloor$ counter-clockwise from $a$
		\State Each agent $a$ simultaneously performs:
		\Statex \texttt{SecretTransmit}($i_a,n',b^1_a$)
		\Statex \texttt{SecretTransmit}($i_a,n',b^2_a$)
		\State At round $n'+1$, each agent sends its input around the ring
		\State At round $2n'$ output $q(I)$
	\end{algorithmic}
\end{algorithm}

\begin{theorem}
	\label{theorem:ks-limit}
	In a ring, Algorithm~\ref{alg_ks} is an equilibrium
	%	when agents have no a-priori knowledge of $n$,
	when no cheating agent pretends to be more than $n$ agents.
\end{theorem}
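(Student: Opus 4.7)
The plan is to argue that no deviation by a cheater---including duplicating into $k \leq n$ fake agents $a_1,\dots,a_k$ occupying consecutive positions $p,p+1,\dots,p+k-1$ of the enlarged ring of size $n'=n+k-1$---strictly improves on honest play, which by the Full Knowledge property yields expected utility $1/|\mathrm{range}(q)|$. The key consequence of Full Knowledge I will use repeatedly is that varying any single unknown input uniformly covers $\mathrm{range}(q)$ evenly, so by iteration $q(I)$ remains uniform conditional on any fixed set of inputs provided at least one input is still uniformly unknown.

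The first structural step is a geometric lemma: every fake agent has at least one honest receiver in $\{b^1_{a_i}, b^2_{a_i}\}$. For $i=k$, $b^1_{a_k}$ sits at position $p+k$, just outside the fake arc. For $i<k$, $b^1_{a_i}=a_{i+1}$ is fake, and I would check that $b^2_{a_i}$, at counter-clockwise distance $\lfloor n'/2\rfloor$, lies outside $[p, p+k-1]$; a short case analysis on the parity of $n+k$ reduces this to $\lceil (n+k-1)/2\rceil \geq k$, which is satisfied whenever $k \leq n$---exactly the hypothesis of the theorem. Consequently every $i_{a_i}$ is bound by at least one honest Secret-Transmit receiver, so any broadcast of $i'_{a_i} \neq i_{a_i}$ at round $n'+1$ triggers output $\bot$, and by Solution Preference such a deviation is weakly worse than honest play. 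The same holds for garbling forwarded $R_c$/$X_c$ relays, refusing to transmit, or tampering with the Wake-Up subroutine (itself an equilibrium binding the agent to a consistent $n'$).

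The remaining freedom is the cheater's choice of the $k$ committed inputs $\{i_{a_i}\}$, and the main obstacle is ruling out the scenario in which the cheater already knows every honest input at the moment it commits and thus forces $q(I)=p_a$ deterministically. I would handle this by a timing argument focused on the honest agent $c=p-1$ immediately counter-clockwise of the fake arc: its Secret-Transmit receiver $b^1_{p-1}=a_1=p$ is fake, and the value $R_{p-1}$ for that transmit is sent from $c$ directly into $a_1$ only at round $n'-1$, so it reaches the cheater only at the start of round $n'$---whereas every one of the cheater's own $R$/$X$ values is finalized (passed to an honest forwarder) no later than round $n'-1$. Hence $i_{p-1}$ is uniformly unknown to the cheater at every commitment moment, and iterating Full Knowledge over the still-unknown inputs yields $\Pr[q(I)=p_a \mid \text{cheater's view}] = 1/|\mathrm{range}(q)|$, equal to the honest expected utility. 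No deviation is strictly profitable, so the algorithm is an equilibrium.
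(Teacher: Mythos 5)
Your overall strategy is the same as the paper's: (i) a geometric step showing every duplicated agent has at least one honest \texttt{Secret-Transmit} receiver, so the cheater's inputs are pinned down and any inconsistent broadcast at round $n'+1$ yields $\bot$; and (ii) a timing step showing the cheater must commit its last degree of freedom at round $n'-1$, before it learns some honest input, so the Full Knowledge property makes every choice equally good in expectation. Both steps, and the way $k\leq n$ enters through $\lceil n'/2\rceil$, match the paper's proof.

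There is, however, one genuine gap in your timing step. To conclude that $i_{p-1}$ is ``uniformly unknown to the cheater at every commitment moment,'' you analyze only the transmission \texttt{Secret-Transmit}$(i_{p-1},n',b^1_{p-1})$ and observe that its $R$-share sits at the honest agent $p-1$ until round $n'-1$. But $i_{p-1}$ is simultaneously secret-shared a \emph{second} time, via \texttt{Secret-Transmit}$(i_{p-1},n',b^2_{p-1})$, and the clockwise share of that second transmission flows directly into the fake arc, so the cheater holds it early. If the counter-clockwise share of that second transmission also reached the cheater before round $n'$, the cheater would reconstruct $i_{p-1}$ early and your uniformity claim would fail. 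You must therefore additionally verify that the holder of that counter-clockwise share --- the agent at counter-clockwise distance $\lfloor n'/2\rfloor - 1$ from $p-1$ --- is honest and retains it until round $n'-1$; this follows from $\lfloor n'/2\rfloor - 1 \leq n-2$, i.e., again from $k\leq n$, and is exactly the check the paper performs (``$a_{c+1}$ holds $X_{n'}$ of that transmission until round $n'-1$''). The gap is fillable by the same arithmetic you already use for the geometric lemma, but as written the ``hence'' in your secrecy claim does not follow: each input travels along two independent shared paths, and you have only closed off one of them.
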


\begin{proof}
	Assume by contradiction that a cheating agent pretending to be $d \leq n$ agents
	has an incentive to deviate.
	W.l.o.g., the duplicated agents are $a_1,\dots,a_d$
	(recall the indices $1,\dots,n'$ are not known to the agents).
	
	Let $n'$ be the size of the ring including the duplicated agents,
	i.e., $n' = n+d-1$.
	The clockwise neighbor of $a_{n'}$ is $a_1$,
	denoted $b^1_{a_{n'}}$. % in Algorithm~\ref{alg_ks}.
	Denote $a_c=b^2_{a_{n'}}$ the agent at distance $\lfloor \frac{n'}{2} \rfloor$ counter-clockwise from $a_{n'}$,
	and note that $c \geq d$.
	
	When $a_{n'}$ calls \texttt{Secret-Transmit} to $a_1$,
	$a_{n'}$ holds $R_{n'}$ of that transmission until round $n'-1$.
	When $a_{n'}$ calls \texttt{Secret-Transmit} to $a_c$,
	$a_{c+1}$ holds $X_{n'}$ of that transmission until round $n'-1$.
	By our assumption, the cheating agent duplicated into $a_1,\dots,a_d$.
	Since $d < c+1$,
	the cheater receives at most one piece ($X_{n'}$ or $R_{n'}$)
	of each of $a_{n'}$'s transmissions before round $n'$.
	So, there is at least one input that the cheater does not learn before
	round $n'$.
	According to the Full Knowledge property (Definition~\ref{full_knowledge}),
	for the cheater at round $n'-1$ any output is equally possible,
	so its expected utility for any value it sends is the same,
	thus it has no incentive to cheat regarding the values it sends in round $n'-1$.
	
	\begin{figure}[H]
		\centering
		\begin{minipage}{0.38\textwidth}
			\centering
			\includegraphics[width=0.9\textwidth]{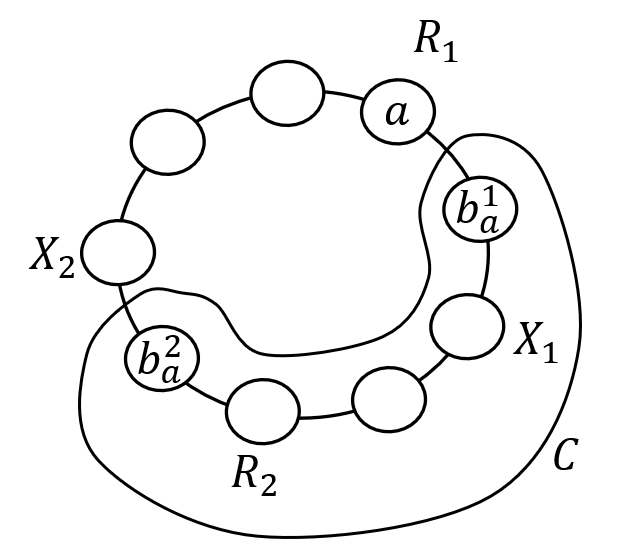} 
			\caption{at round $n'-1$: $i_a=R_1 \oplus X_1=R_2 \oplus X_2$;
				group $C$ does not know the input of $a$}
			%		\label{figure:coloring_ring}
		\end{minipage}\hfill
		\begin{minipage}{0.45\textwidth}
			\centering
			\includegraphics[width=0.9\textwidth]{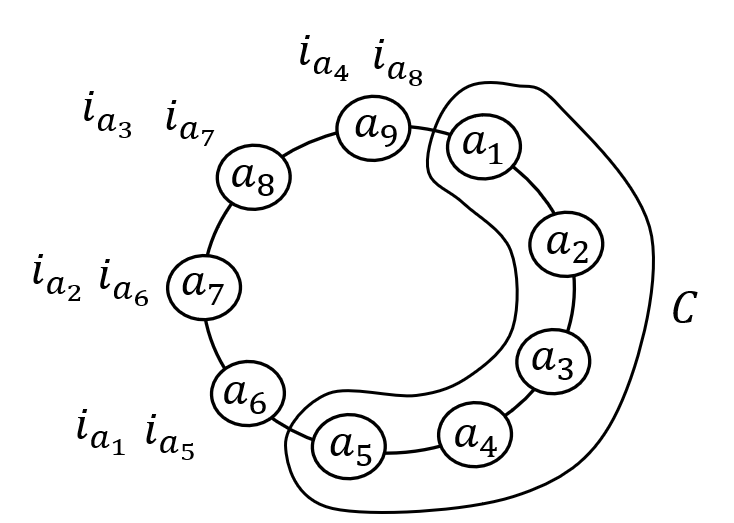}
			\caption{the input that agents $a_6,\dots,a_9$ learn at round $n'$;
				each input of an agent in $C$ is known by an agent not in $C$}
			%		\label{figure:coloring_ring_extended}
		\end{minipage}
	\end{figure}

	Let $a_j \in \{a_1,\dots,a_d\}$ be an arbitrary duplicated agent.
	In round $n'$, $i_{a_j}$ is known by its clockwise neighbor $b^1_{a_j}$
	and by $b^2_{a_j}$, the agent at distance $\lfloor \frac{n'}{2} \rfloor$
	counter-clockwise from $a_j$.
	Since the number of counter-clockwise consecutive agents in
	$\{b^1_{a_j},a_j,\dots,b^2_{a_j}\}$	is greater than
	$\lceil \frac{n'}{2} \rceil \geq n$,
	at least one of $b^1_{a_j}, b^2_{a_j}$ is not a duplicated agent.
	Thus, at round $n'$, the input of each agent in $\{a_1,\dots,a_d\}$
	is already known by at least one agent $\notin \{a_1,\dots,a_d\}$.
	
	At round $n'-1$ the cheater does not know the input value of at least one other agent,
	so by the Full Knowledge property it has no incentive to deviate.
	At round $n'$ for each duplicated agent,
	its input is already known by a non-duplicated agent,
	which disables the cheater from lying about its input
	from round $n'$ and on.
	
	Thus, the cheating agent has no incentive to deviate, contradicting our assumption.
\end{proof}

\subsection{Coloring in General Graphs}
\label{section:alg:coloring}

Here, agents are given exact a-priori knowledge of $n$.

We present Algorithm~\ref{alg:SemiTotalOrder}, the Witness Algorithm,
that uses two subroutines to obtain a coloring: \emph{Draw} and \emph{Prompt}.

The Witness algorithm includes three colorings: The agents already have an $n$-coloring set by their $id$s. But agents can cheat by lying about their $id$s. So each agent selects a new unique number from $1$ to $n$, in a way that prevents it from cheating about the number. That is another $n$-coloring of the agents, but this coloring does not ensure that every agent $a$ with no neighbors that output $p_a$ will output $p_a$. Thus, an agent that can still output $p_a$ legally, i.e., none of its neighbors are numbered $p_a$, will output $p_a$ \emph{instead} of its unique number, which may also cause an illegal coloring in case two neighbors with the same preferred color behave this way. 
Instead, this second $n$-coloring is used as the order by which agents \emph{choose} their final color, resulting in a legal coloring in which agents have no incentive to cheat.

The algorithm begins by running Wake-Up \cite{Afek:2014:DCB:2611462.2611481} to learn the graph topology
and the $id$s of all agents.
Then, in order of $id$s, each agent $a$ draws a random number $S(a)$
with a neighboring "witness" agent $w(a)$ as specified in Algorithm~\ref{alg:Draw}, and sends it to all of its neighbors.
The number is drawn in the range $1,\dots,n$ and since agents draw numbers one by one, each agent knows the set of numbers already taken by its neighbors and uses it as the input $T$ for its call to \emph{Draw}, ensuring no two neighbors draw the same number and resulting in a coloring of $G$.
However, this coloring is not enough since any equilibrium must ensure an agent that can output its desired color \emph{will} output it.
In \emph{Draw}, 
the agent cannot cheat in the random number generation process since it exchanges a random number simultaneously with its respective witness, 
and $w(a)$ is marked as a witness for future verification, ensuring $a$ will not lie about its drawn number.
When done, each agent simultaneously verifies the numbers published by its neighbors using Algorithm~\ref{alg:prompt},
which enables it to receive each value through two disjoint paths: directly from the neighbor,
and via the shortest simple path to the neighbor's witness that does not include the neighbor.
Then each agent, in order of the values drawn randomly, 
picks its preferred color if available, or the minimal available color otherwise,
and sends its color to all of its neighbors.

\emph{Draw} (Algorithm~\ref{alg:Draw}) is an equilibrium in which agent $a$
randomizes a number different from those of its neighbors and commits to it, and is depicted in Figure~\ref{figure:draw}. For an arbitrary agent $a$ and a subset $T$ of numbers from $\{1 ,\dots, n\}$, \emph{Draw} begins with $a$ sending its neighbor with minimum $id$ the string \texttt{witness}. The following round, $a$ and the agent that received the \texttt{witness} string exchange a random number from $1$ to $|X|$ at the same time, where $X = \{1 \dots n\} \setminus T$. The sum of these two randomized numbers $\mod |X|$, denoted $q$, is calculated at $a$ and at its witness, and then $a$ notifies all its neighbors that its result from the \emph{Draw} process is the $q$'th largest number in $X$, ensuring that when these neighbors use \emph{Draw}, they do not include that number in their respective $X$. \emph{Draw} takes a constant number of rounds to complete, denoted $|Draw|$ rounds.
\emph{Prompt} (Algorithm~\ref{alg:prompt}) is a query that ensures $a$ receives the correct drawn number from a neighbor. 

The complexity of the Witness Algorithm is $O(|E|\cdot n)$.

\begin{algorithm}[H]
	\caption{Witness Algorithm (for agent $a$)}
	\label{alg:SemiTotalOrder}
	\begin{algorithmic}[1]
		
		\State Run Wake-Up \Comment{After which all agents know graph topology}
		\State set $T := \varnothing$
		\Comment{$T$ is the set of values already taken by $a$'s neighbors ($N(a)$)}
		\For{$i=1,...,n$}
		\If{$id_a = i$'th largest $id$ in $V$}
		\Comment{Draw random numbers in order of $id$}
		\State $Draw(T)$
		\Else
		\State \textbf{wait} $|Draw|$ rounds
		\Comment {Wait for $Draw$, takes a constant number of rounds}	
		\If{received $S(v)$ from $v \in N(a)$}
		\Comment{$S(v)$ is the value of $v$ from $Draw$}
		\State $T = T \cup \{S(v)\}$
		\Comment {Add $S(v)$ to set of taken values}
		\EndIf
		\EndIf
		\EndFor
		
		\For{$u \in N(a)$ \textbf{simultaneously}}
		\State $Prompt(u)$  \Comment{Validate the value received in line $8$ using Algorithm~\ref{alg:prompt}}
		\EndFor
		\State \textbf{wait} until all prompts are completed in the entire graph \Comment {At most $n$ rounds}
		\For{round $t = 1,...,n $}:
		%\item $a = v \in V \ s.t., \ K(v)= \max_{v} K(v)$
		\If { $S(a) = t$ }  \Comment {Wait for your turn, decreed by your $S$ value}
		\If { $\forall v \in N(a) : o_v \neq p_a$}  $o_a := p_a$
		
		\Else { $o_a := $ minimum color unused by any $v \in N(a)$}
		\EndIf 
		\State \textbf{send} $o_a$ to $N(a)$
		\EndIf
		\EndFor
		
	\end{algorithmic}
\end{algorithm}

\begin{algorithm}[H]
	\caption{Draw($T$) Subroutine (for agent $a$ and input $T \subseteq \{1 ,\dots, n\}$)}
	\label{alg:Draw}
	\begin{algorithmic}[1]
		\Statex Denote $X=\{1,...,n\}\setminus{T}$
		\Comment {X is the set of numbers not drawn by neighbors}
		\State $w(a) :=$ node $b$ s.t., $id_b$ is minimal in N(a)
		\Comment {$N(a)$ is the set of neighbors of $a$}
		\Statex \textbf{send} the string \texttt{witness} to $w(a)$
		\Comment {choose neighbor with minimal $id$ as witness} 	
		\State $r(a) := random\{1,...,|X|\}$ drawn by $a$
		\Statex $r(w(a)) := random\{1,...,|X|\}$ drawn by $w(a)$ 
		\Statex \textbf{send} $r(a)$ to $w(a)$
		\Statex \textbf{receive} $r(w(a))$ from $w(a)$
		\Comment{$a$ and $w(a)$ exchange $r(a), r(w(a))$ at the same time}
		\State Let $q := r(a)+r(w(a))\ mod\ |X|$.
		\Statex Set $S(a) := q$'th largest number in $X$
		\Statex \textbf{send} $S(a)$ to all $u \in N(a)$ \Comment {Calculate $S(a)$ and publish to neighbors} 
	\end{algorithmic}
\end{algorithm}

\begin{figure}[H]
	\caption{Communication in a \emph{Draw} algorithm commenced by $a_1$}	
	\label {figure:draw}
	\captionsetup{justification=centering}
	\includegraphics[scale=0.5]{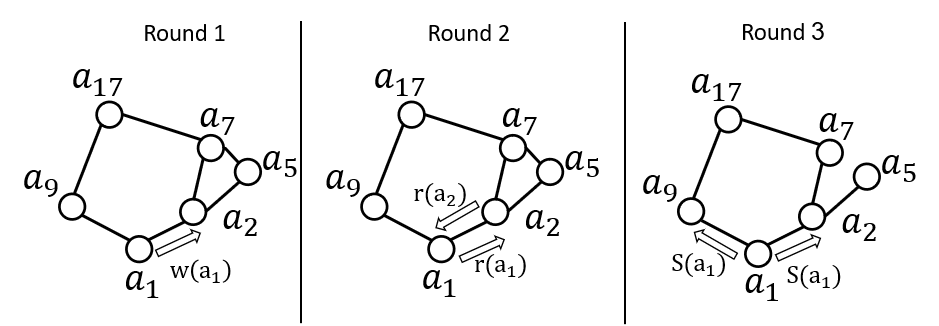}
	\centering
\end{figure}

\pagebreak

\begin{algorithm}[H]
	\caption{Prompt$(b)$ Subroutine (for agent $a$)}
	\label{alg:prompt}
	\begin{algorithmic}[1]
		\Statex upon receiving a $prompt(b)$ message from $b \in N(a)$:
		\State $p := \text{ shortest simple path  } a \rightarrow w(a) \rightarrow b$
		\Comment{$w(a)$ is set by a preceding call to \texttt{Draw}}
		\Statex \textbf{send} $S(a),b $ via $p$ 
		\Comment{If $v \neq w(a)$ is asked to relay $S(a)$, $v$ fails the algorithm}
		\Statex \textbf{send} $S(a)$ to $b$ via $e = (a,u)$
		\Comment {$b$ validates that both messages received are consistent}
	\end{algorithmic}
\end{algorithm}

\begin{figure}[H]
	\caption{Communication in a \emph{Prompt($a_1$)} commenced by $a_9$}	
	\label {figure:prompt}
	\captionsetup{justification=centering}
	\includegraphics[scale=0.4]{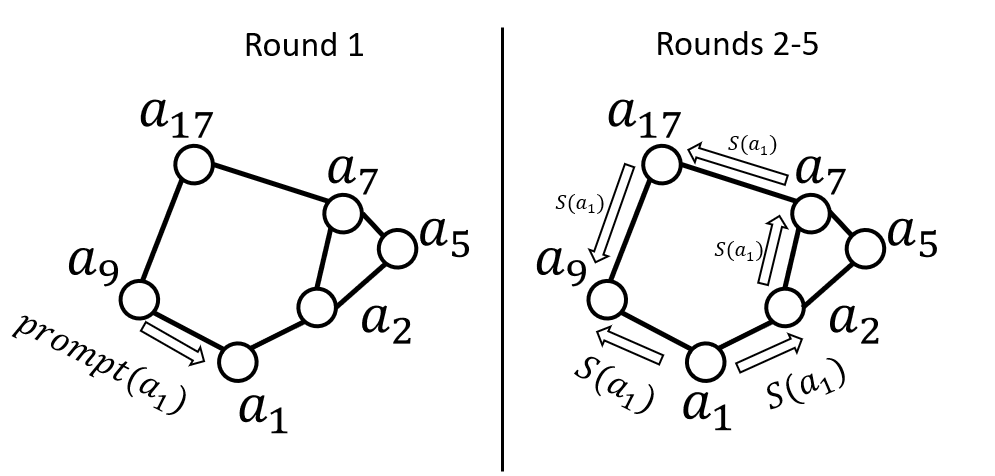}
	\centering
\end{figure}

The resulting message complexity is:
\begin{itemize}
	\item \emph{Wake Up} is $O(|E| \cdot n)$
	\item Drawing a random number is called $n$ times and uses a total of $O(|E|)$ messages in total to publish these values to neighbors
	\item Verifying the value of a neighbor uses $O(diameter)$ messages and is called $O(|E|)$ times, for a total of $O(|E| \cdot diameter)$
	\item Sending the output color takes an additional $O(|E|)$ messages
\end{itemize}

Since the diameter of the network is at most $n$ (line graph), we have a total of $O(|E| \cdot n)$.

\begin{theorem}
	The Witness Algorithm (Algorithm~\ref{alg:SemiTotalOrder}) is an equilibrium for Coloring when agents a-priori know $n$.
\end{theorem}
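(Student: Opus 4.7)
The plan is to enumerate every point at which a single deviating agent $a$ could act differently in the Witness Algorithm and to show, case by case, that no such deviation strictly increases $\ev[u_a]$. The relevant deviation points are: (i) Wake-Up, which is already known to be an equilibrium \cite{Afek:2014:DCB:2611462.2611481} and whose output (topology and $id$s) affects the rest of the protocol only by fixing an invocation order; (ii) invocations of \emph{Draw} in which $a$ is the initiator; (iii) invocations of \emph{Draw} in which $a$ is chosen as witness for some neighbor $b$; (iv) the \emph{Prompt} exchanges in which $a$ must report $S(a)$ to each neighbor; and (v) the final step in which, at round $t = S(a)$, $a$ commits to an output color.

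For (ii), I would argue that $a$ cannot bias $S(a)$. In \emph{Draw}, $a$ and $w(a)$ transmit their local random numbers in the same round, so $a$ must commit to $r(a)$ before learning $r(w(a))$. Since only $a$ deviates, $r(w(a))$ is drawn uniformly from $\{1,\dots,|X|\}$, hence $q = (r(a)+r(w(a))) \bmod |X|$ is uniform on $\{0,\dots,|X|-1\}$ no matter what distribution $a$ uses for $r(a)$. Consequently $S(a)$ is uniform on $\{1,\dots,n\} \setminus T$ regardless of $a$'s behavior. For (iii) the symmetric argument applies: if $a$ biases $r(a)$ while serving as witness for $b$, the neighbor's value $S(b)$ is still uniform from $a$'s point of view, because $r(b)$ is uniform and unknown to $a$ when it commits. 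Thus no deviation in \emph{Draw} changes the joint distribution of the $S$ values from $a$'s posterior.

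For (iv), any attempt by $a$ to announce $S'(a) \neq S(a)$ to a neighbor $u$ is caught by \emph{Prompt}: $u$ also receives the true $S(a)$ along the disjoint path $a \to w(a) \to u$, and any intermediate node asked to relay a different value is instructed to fail the algorithm, which by Solution Preference yields utility $0$. The same mechanism rules out the composite deviation in which $a$ sends one value to $w(a)$ during \emph{Draw} but publishes a different $S(a)$ afterwards, since $w(a)$'s view of $S(a)$ is fixed at the moment of \emph{Draw} and is exactly the value later delivered along $a \to w(a) \to u$.

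Finally, for (v), because $S(a)$ is uniform and independent of the private preferences $\{p_b\}_{b \neq a}$, and because $a$'s prior over those preferences is uniform, $a$'s conditional probability of obtaining $p_a$ given the history up to its turn is fully determined; it is maximized by outputting $p_a$ whenever this is legal and by picking the minimum available color otherwise. Any other choice either gives utility $0$ directly (if $p_a$ was available and $a$ picked something else) or produces an erroneous coloring (if $p_a$ is not available but $a$ picks it anyway), which by Solution Preference also gives utility $0$. The main obstacle I expect is rigorously ruling out composite deviations across phases — for instance $a$ lying to $w(a)$, computing a different $S(a)$ internally, then trying to reconcile this with \emph{Prompt} — and the crux of that argument is that the two disjoint paths exposed by \emph{Prompt} pin the published $S(a)$ to the value seen by $w(a)$, so every such mismatch is detected and thus eliminated by Solution Preference.
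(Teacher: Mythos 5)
Your proposal follows essentially the same route as the paper's proof: the same phase-by-phase enumeration of deviation points, and the same key observations --- the simultaneous exchange in \emph{Draw} makes $S(a)$ uniform regardless of how $a$ randomizes, the two disjoint paths in \emph{Prompt} pin the published $S(a)$ to the value held by $w(a)$, and Solution Preference disposes of the final color choice. Your treatment of \emph{Draw} (including the witness role and the composite cross-phase deviations) is, if anything, more explicit than the paper's.

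There is, however, one deviation point your enumeration omits, and it is the most tempting one in the final phase: sending the color message \emph{out of turn}. Your case (v) analyzes which color $a$ should output at round $t = S(a)$, but a rational agent would rather announce $p_a$ at some round $t' < S(a)$, before a neighbor with a smaller $S$ value and the same preference gets to claim that color; the whole purpose of the second coloring by $S$ values is to impose this schedule, so jumping the queue is precisely the deviation the construction must block. Nothing in your cases (i)--(iv) rules it out, since they only establish that the published $S$ values are correct, not that agents respect the order those values induce. The paper closes this explicitly: because every neighbor has verified $S(a)$ via \emph{Prompt}, a color message arriving at the wrong round is immediately recognized as a deviation, the algorithm is failed, and Solution Preference gives the deviator utility $0$. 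The repair is a single sentence inside your own framework, but without it the case analysis is not exhaustive.
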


\begin{proof}
	Let $a$ be an arbitrary agent. Assume in contradiction that at some round $r$ there is a
	possible cheating step $s$ such that $s \neq s_r$
	and $ \mathbb{E}_{s,r}[u_a] > \mathbb{E}_{s_r,r}[u_a]$.
	Consider the possible deviations for $a$ in every phase of Algorithm~\ref{alg:SemiTotalOrder}:
	\begin{itemize}
		\item Wake-Up:
		The order by which agents initiate Algorithm~\ref{alg:Draw} has no effect on the order by which they will later set their colors.
		Hence, $a$ has no incentive to publish a false $id$ in the Wake-Up building block.
		\item \emph{Draw} is an equilibrium: The topology and $id$s are known to all agents from Wake-Up, so all agents know who should be the witness of each agent (thus no agent can use \emph{Draw} with the wrong witness). Agent $a$ and its witness then exchange a random number at the same time, so $a$ cannot affect the draw: Its expected utility is the same regardless of the number it sends in the exchange.
		\item Publishing a false $S$ value will be caught by the verification in step 10 of Algorithm~\ref{alg:SemiTotalOrder}.
		\item Sending a color message not in order will be immediately recognized by the neighbors, since $S$ values were verified.
		\item Agent $a$ might output a different color than the color dictated by Algorithm~\ref{alg:SemiTotalOrder}. But if the preferred color is available,
		then outputting it is the only rational behavior.
		Otherwise, the utility for the agent is already $0$ in any case. \qedhere
	\end{itemize}
\end{proof}
%%%%%%%%%%%%%%%%%%%%% ACYCLIC COLORING END %%%%%%%%%%%%%%%%%%%%%%%%%%

\section{Impossibility With No Knowledge}
\label{section_imp}
%In this section we prove the impossibility of Knowledge Sharing and Coloring
%in a network of rational agents,
%when agents have no a-priori knowledge of $n$ or its equivalent.

Here we prove that the common assumption that $n$ is known is the key
to the possibility of equilibrium for the Knowledge Sharing problem. 
When agents know that $n \sim \sigma$ where $\sigma$ has infinite support, there is \emph{no} equilibrium for the Knowledge Sharing problem. In other words, agents \emph{must} have some a-priori information that $n$ is part of a finite set.

In this section we label agents in the graph as $a_1,...,a_n$
in an arbitrary order in any topology.
These labels are not known to the agents themselves.

%%%%%%%%%%%%%%%%%%%%%%%%%%%%%%%%%%%%%%%%%%%%%%%%%%%%%%%%%%%%%%%%%%%%%%%%%%%%%%%%%

\subsection{Impossibility of Knowledge Sharing}% Without a-priori Knowing $n$}
\label{know_share}

%%%%%%%%%%%%%%%%%%%%%%%%%%%%%%%%%%%%%%%%%%%%%%%%%%%%%%%%%%%%%%%%%%

\begin{theorem}
	\label{theorem:KSgeneral}
	There is no protocol for Knowledge Sharing
	that is an equilibrium in a $2$-connected graph when agents' prior on $n$ has infinite support.
	Formally, for any protocol $\Lambda$ for Knowledge Sharing there exists a graph $G$ where agents know $n \sim \sigma$ when $\sigma$ has infinite support such that there exists an agent $a$ and a strategy $D$ for $a$ such that:
	\[
		\ev_{D}[u_a] > \ev_{\Lambda}[u_a]
	\]
\end{theorem}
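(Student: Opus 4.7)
The plan is to prove the theorem by contradiction: suppose $\Lambda$ is an equilibrium on some $G$ under some prior $\sigma$ with infinite support, and construct an agent $a$ and a Sybil strategy $D$ with $\ev_D[u_a] > \ev_\Lambda[u_a]$. I would take $G$ to be a ring (a $2$-connected graph consistent with the model) and use a concrete Full Knowledge instance, e.g.\ $k$-Knowledge Sharing with $q(x_1,\dots,x_m) = \sum_i x_i \bmod k$ and inputs drawn uniformly from $\mathbb{Z}_k$. Under Full Knowledge, any agent's prior over the output before the protocol starts is uniform over the $k$ possible values; so under an equilibrium and honest play $\ev_\Lambda[u_a] = 1/k$, giving the baseline to beat.

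The strategy $D$ is: $a$ duplicates into $d$ consecutive virtual agents $a_1,\dots,a_d$, so that the ring \emph{appears} to have size $n' = n+d-1$. Infinite support is the key leverage: $\prob[n=n']>0$ for every $d$, so by Solution Preference no honest agent may preemptively abort just because the ring looks large --- aborting would convert a potentially legal outcome into an erroneous one, which is strictly dominated. Unlike the finite-support case there is no $\beta$ capping the apparent size, so $d$ can be taken as large as the argument requires.

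The core step adapts the counting in the proof of Theorem~\ref{theorem:ks-limit} in reverse. For $d$ sufficiently larger than $n$, the $n-1$ honest agents form a short arc pinched between the cheater's long duplicate arc, so two things happen simultaneously: (i) any protocol-mandated message exchanged between non-adjacent honest positions must traverse cheater-controlled edges and is therefore read by $a$, so $a$ eventually learns every honest input; and (ii) enough duplicate positions are \emph{unwitnessed} --- every honest verifier of a duplicate's claimed input is itself a duplicate --- that $a$ retains at least one free, late-binding ``input'' that it may assign post hoc without producing a visible inconsistency. Once $a$ knows the honest inputs, Full Knowledge (Definition~\ref{full_knowledge}) guarantees a value $y^*$ for the free duplicate such that $q$ over all inputs equals $p_a$; $a$ plays $y^*$ and outputs $p_a$, and each honest agent, computing $q$ from its collected data, also outputs $p_a$. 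The output is legal and $u_a=1$, so $\ev_D[u_a]=1>1/k=\ev_\Lambda[u_a]$, the desired contradiction.

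The hard part is establishing informational dominance for an \emph{arbitrary} protocol $\Lambda$, not just Algorithm~\ref{alg_ks}. Different protocols may spread information via randomness or nonstandard commit primitives, and each such $\Lambda$ must still terminate correctly on every ring size in $\sigma$'s support. The argument must jointly use $2$-connectivity (forcing inter-honest communication through cheater territory when the duplicate arc is long), Full Knowledge (so that every input materially affects the output and any dissemination scheme exposes it), and correctness (so enough information flows to let every agent compute $q$). One must also check that the cheater's transcript remains internally consistent so that the only potential detection route --- a size check --- is exactly the one that infinite support forbids.
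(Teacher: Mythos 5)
Your high-level mechanism is the right one --- exploit infinite support so that no size check can ever fire, invoke Solution Preference to rule out preemptive aborts, and have the cheater impersonate a large contiguous portion of the network --- and your baseline $\ev_\Lambda[u_a]=1/k$ is correct. But the core step is asserted rather than proven, and part of it is false as stated. Claim (i) does not hold: when the cheater occupies a long arc of the ring, the $n-1$ honest agents form a \emph{contiguous} arc, so two non-adjacent honest agents can still exchange messages along a path lying entirely inside that arc; nothing forces inter-honest traffic through cheater-controlled edges, and a protocol can secret-share honest inputs so that the cheater never reads them in transit. Claims (i) and (ii) jointly --- that the cheater both attains informational dominance and retains an uncommitted ``free'' input --- are exactly the content of the theorem for an \emph{arbitrary} protocol, and you explicitly defer this as ``the hard part.'' You also never address timing: even if the cheater eventually reconstructs every honest input (which it must, since each duplicate has to be able to output $q(I)$ for the run to look correct), a protocol could arrange a simultaneous exchange so that by the time the cheater learns the honest inputs, the honest agents' outputs are already determined and no late-binding input remains. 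Finally, demanding $\ev_D[u_a]=1$ is stronger than necessary and makes the missing step harder than it needs to be.

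The paper closes precisely this gap with a different device. It splits a $2$-connected graph $H$ into two subgraphs $D$ and $E$ joined by two cut vertices $a_1,a_2$ (a ring is the special case where $D,E$ are paths), and defines $t_D$ (resp.\ $t_E$) as the first round at which the \emph{collective} information held in $D$ (resp.\ $E$) determines $q(I)$. A case analysis on whether $t_E<t_D$, $t_E=t_D$, or $t_E>t_D$ shows that the cheater, by simulating the earlier side together with $a_1,a_2$, knows $q(I)$ strictly before the other side's collective information determines it; hence the honest side's output still depends on future messages from the cheater, and if every continuation led the honest side to the same output, that side would already have determined $q(I)$, a contradiction. So upon seeing $q(I)\neq p_a$ the cheater can choose a continuation with \emph{positive} probability of steering the honest output to $p_a$, which already yields $\ev_D[u_a]>1/k$. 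This weaker conclusion suffices for the theorem and entirely avoids having to prove your claims (i) and (ii). To repair your proposal you would need either to supply the informational-dominance-plus-free-input argument for arbitrary protocols (which I do not believe goes through in the form stated) or to replace it with a timing argument of the paper's kind.
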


\begin{proof}
	Assume by contradiction that  $\Lambda$ is a Knowledge Sharing algorithm
	that is an equilibrium in any graph of agents with a prior with infinite support on $n$. 
	Let $D$, $E$ be two $2$-connected graphs of rational agents.
	Consider the execution of $\Lambda$ on graph $H$ created by $D,E$, and adding two nodes $a_1,a_2$ and connecting
	these nodes to $1$ or more arbitrary nodes in both $D$ and $E$ (see Figure~\ref{figure:general_graph_clear}).

	\begin{figure}[H]
		\centering
		\begin{minipage}{0.45\textwidth}
			\centering
			\includegraphics[width=0.9\textwidth]{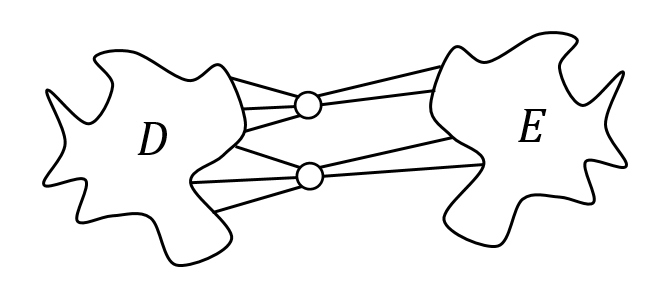} 
			\caption{Graph $H$ created by two arbitrary sub-graphs $D$,$E$}
			\label{figure:general_graph_clear}
		\end{minipage}\hfill
		\begin{minipage}{0.45\textwidth}
			\centering
			\includegraphics[width=0.9\textwidth]{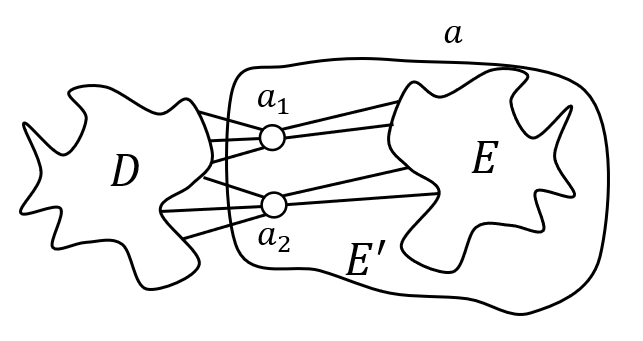}
			\caption{Example of agent $a$ pretending to be $E'=E \cup \{a_1,a_2\}$}
			\label{figure:general_graph_duplicated}
		\end{minipage}
	\end{figure}
	 
	Recall that the vector of agents' inputs is denoted by $I = i_1 , i _2 , \cdots , i_n$,
	and $n=|H|=|D|+|E|+2$.
	Let $t_D$ be the first round after which $q(I)$ can be calculated
	from the collective information that all agents in $D$ have (regardless of the complexity of the computation),
	and similarly $t_E$ the first round after which $q(I)$ can be calculated in $E$.
	%Notice that the Full Knowledge property ensures that until all the values in $I$ can be derived the value of $f(I)$ %cannot be calculated with absolute certainty. 
	Consider the following three cases:

	\begin{enumerate}
		\item $\bm{t_E<t_D}$: $q(I)$ cannot yet be calculated in $D$ at round $t_E$.
%		thus according to the Full Knowledge property,
%		more than $1$ output
%		is still possible for the agents in $G$,
%		and depends on the messages received in the following rounds.
		Let $E'=E \cup \{a_1,a_2\}$.
		Since $E \subset E'$,
		the collective information in $E'$ at round $t_E$ is enough to calculate $q(I)$.
		Since $n$ is not known, an agent $a$ could emulate the behavior of $E'$,
		making the agents believe the algorithm runs on $H$ rather than $D$.
		In this case, this cheating agent knows at round $t_E$ the value of $q(I)$ in this execution,
		but the collective information of agents in $D$ is not enough to calculate $q(I)$,
		which means the output of agents in $D$ still depends on messages from $E'$, the cheater.
		Thus, if $a$ learns that the output $q(I) \neq p_a$,
		it can simulate all possible runs of the algorithm in a state-tree, and select a course of action that has at least some probability of leading to an outcome $q(I) = p_a$. Such a message surely exists because otherwise, $D$ would have also known the value of $q(I)$. In other words, $a$ finds a set of messages that \emph{might} cause the agents in $D$ to decide a value $x \neq q(I)$.
		In the case where $p_a=x$, agent $a$ increases its expected utility by sending
		a set of messages different than that decreed by the protocol.
		Thus, agent $a$ has an incentive to deviate, contradicting distributed equilibrium.
		\label{case:t_great_full_graph}
		
		\item $\bm{t_D=t_E}$: both $E$ and $D$ have enough collective information to calculate $q(I)$
		at the same round.
		The collective information in $E$ at round $t_E$ already exists in $E'$ at round $t_E-1$.

		Since $t_D=t_E$, the collective information in $D$ is not enough to calculate $q(I)$ in round $t_E-1$.
		Thus, similarly to Case~\ref{case:t_great_full_graph}, $a$ can emulate $E'$ and has an incentive to deviate.
		
		\item $\bm{t_E>t_D}$: Symmetric to Case~\ref{case:t_great_full_graph}.
	\end{enumerate}
	
	Thus, $\Lambda$ is not an equilibrium for the Knowledge Sharing problem.
\end{proof} 

\begin{comment}
\begin{corollary}
	\label{cor:ks-limit}
	When a cheating agent pretends to be $d>n$ agents,
	there is no equilibrium for Knowledge Sharing
	\new{when the prior agents have on $n$ has an infinite support.}
	\fix{(Shaked: Moshe do you think we still need this?)}	
\end{corollary}
\begin{proof}
		Let $H$ be a graph such that $|D| = |E|$. Follow the proof of Theorem~\ref{theorem:KSgeneral}.
\end{proof}

Note that this corollary holds when any network size from $n$ and above is possible.
\end{comment}
%%%%%%%%%%%%%%%%%%%%%%%%%%%%%%%%%%%%%%%%%%%%%%%%%%%%%%%%%%%%%%%%%%%%%%%%%%%%%%%%%

\section{How Much Knowledge Is Necessary?}
\label{section_approx}
In this chapter, agents know that $n \sim \sigma$ where $\sigma$ is a prior on $n$ with finite support,
	i.e., $\sigma \subseteq [\alpha,\beta]$ for some $\alpha,\beta \in \mathbb{N}$.
We examine several distributed computing algorithms and find bounds on the interval $[\alpha,\beta]$ for which we have equilibria. We use two test-cases for possible priors, namely, a uniform prior and a geometric prior. 
We begin by showing bounds on the interval  $[\alpha,\beta]$ that for any $\sigma \subseteq [\alpha,\beta]$, Algorithm~\ref{alg_ks} is an equilibrium for Knowledge Sharing in a ring network. Then, we use a subset of these results to show a new algorithm for Coloring a ring network, and show that if Algorithm~\ref{alg_ks} is an equilibrium when $n \sim \sigma$, then the Coloring algorithm provided is an equilibrium when $n \sim \sigma$. 

Finally, we look into bounds on $\alpha$,$\beta$ for Leader Election, Partition and Orientation.

Notice that of all the problems we examine, Partition and Orientation are an exception: We show that they have equilibria without \emph{any} knowledge of $n$, i.e., even if the support agents have for $\sigma$ is infinite;
however, the former is constrained to even-sized rings,
and the latter is a trivial problem in distributed computing
(radius $1$ in the \textit{LOCAL} model \cite{Linial:1987:DGA:1382440.1382990}).

Consider an agent $a$ at the start of a protocol:
If $a$ pretends to be a group of $d$ agents, it is caught when $d + n - 1 > \beta$,
since when agents count the number of agents they identify that a deviation occurred.
The finite support for $\sigma$ creates a situation where \emph{any} duplication now involves some risk
since the actual value of $n$ is not known to the cheater
(similar to \cite{gt_idspace}).

Let $\Gamma$ denote the strategy of following the protocol, and for any agent $a$ let $\ev_{\Gamma}[u_a]$ be the expected utility for $a$ by following the protocol without any deviation.  
An arbitrary cheating agent $a$ simulates executions of the algorithm for every possible duplication,
and evaluates its expected utility compared to $\ev_{\Gamma}[u_a]$. 
Denote $D$ a duplication scheme in which an agent pretends to be $d$ agents.

Let $P_D = P[d + n - 1 > \beta]$ be the probability,
from agent $a$'s perspective, %which assumes a uniform distribution of $n$ over $[\alpha,\beta]$,
that the overall size of the network exceeds $\beta$. 
If for agent $a$ there exists a duplication scheme $D$ at round $0$ such that
$ \mathbb{E}_{D,0}[u_a] \cdot (1-P_D)  > \mathbb{E}_{s(0),0}[u_a]$,
then agent $a$ has an incentive to deviate and duplicate itself \footnote{Agents are assumed to duplicate at round $0$ because any other duplication can be detected by adding a \emph{Wake-Up} routine which starts by mapping the topology. We consider round 0 as deciding on $d$ before the \emph{Wake-Up} is completed.}

%\subsection{Knowledge Sharing is $(2\alpha-2)$-bound}
\subsection{Knowledge Sharing in a Ring}
\label{class_KS}	
% and algorithm in \cite{Afek:2014:DCB:2611462.2611481}.
%	Unlike Section~\ref{section_imp}, some risk is involved in duplication.

In ring networks, we have shown Algorithm~\ref{alg_ks} is an equilibrium if no agent duplicates $d > n$. In other words, the only way an agent $a$ can increase its expected utility is by duplicating into at least $2$ more agents than the size of the remainder of the ring, i.e., $n'$ is made of $n-1$ honest agents and at least $n+1$ duplicated cheater agents. It is easy to see that if a cheating agent had duplicated $d > n$ agents and was not detected (i.e., $d + n - 1 \leq \beta$), its utility is $1$ since it controls the output. Notice that Algorithm~\ref{alg_ks} begins with \emph{Wake-Up}, so a duplicating agent must decide on its duplication scheme before any communication other than \emph{Wake-Up} occurs. 

We consider a specific duplication scheme for the cheating agent: Duplicating $d = \lfloor \frac{\beta}{2} \rfloor + 1$. This cheating strategy maximizes a cheater's expected utility by obtaining the optimal balance between $P_D$ and successfully duplicating $d > n$ (recall that $n$ is a random variable from the cheater's perspective).
Notice that duplicating $d> n$ is beneficial \emph{only} when $\beta \geq 2\alpha$. If $\beta < 2\alpha$, duplicating $d = \lfloor \frac{\beta}{2}  \rfloor + 1$ would either be detected and lead to $0$ utility, or in case $n = \alpha$, would result in $d = n$ and not increase the cheater's utility. Any duplication other than $d = \lfloor \frac{\beta}{2}  \rfloor + 1$  would also result in $d \leq n$ (which does not improve the cheater's expected utility) or in detection.

Let $a$ be an agent in the ring at round $0$ with prior $n \sim \sigma$.
Let $D$ be a cheating strategy in which $a$ duplicates into  $d = \lfloor \frac{\beta}{2} \rfloor +1$ agents,
and if $a$ is not caught, it skews the protocol to output $p_a$.
The expected utility by following strategy D is:

\begin{multline*}
\ev_{D,0}[u_a] = 0 \cdot \prob[\text{detected}] + 1 \cdot \prob[\text{undetected, duplicated } d>n] +
	\frac{1}{k} \cdot \prob[\text{undetected, duplicated } d \leq n] = \\
1 \cdot \prob[n < d \land d +n - 1 \leq \beta] + \frac{1}{k} \prob[ d \leq n \land d +n - 1 \leq \beta ]
= \\ =
1 \cdot \prob[n \leq \min (d-1, \beta - d + 1)]  + \frac{1}{k}
\big( \prob[n \leq \beta - d +1] -\prob[n \leq d - 1] \big)
\end{multline*}

If for all agents this utility is at most $\ev_{\Gamma}[u_a] = \frac{1}{k}$, the expected utility from following the protocol, then Algorithm~\ref{alg_ks} is an equilibrium. Setting the probability mass function F and
$d = \lfloor \frac{\beta}{2} \rfloor +1$ yields the following requirement for Algorithm~\ref{alg_ks} to be an equilibrium:

\begin{equation}
\label{eq:cheater_eu_ks}
\ev_{D,0}[u_a] =
F(\lfloor \frac{\beta}{2} \rfloor) +
\frac{1}{k} \bigg( F(\lceil \frac{\beta}{2} \rceil) - F(\lfloor \frac{\beta}{2} \rfloor) \bigg) 
\leq
\frac{1}{k}
\end{equation}

For each prior and each Knowledge Sharing variant, if $\ev_{D,0}[u_a] > \frac{1}{k}$ then there is an incentive to cheat by duplicating $d = \lfloor \frac{\beta}{2} \rfloor + 1$, and then using an optimal cheating strategy that derives a utility of $1$. Table~\ref{table:ks_bounds} shows the conditions on $[\alpha,\beta]$ for Algorithm~\ref{alg_ks} to be an equilibrium.
Following the table are explanations of each entry.

\begin{table}[ht]
	\centering
	\begin{tabular}{|c|c|c|}
		\hline
		& $\sigma$ uniform &  $\sigma$ geometric   \\ \hline
		$k=2$
		& \shortstack{Equilibrium for any \\ $\alpha,\beta$ (1) }

		& \shortstack{Equilibrium iff \\ $\beta \leq 2\alpha - 1$} (4)
		
		\\ \hline
		\shortstack{Constant \\ $k > 2$}
		& \shortstack{ Equilibrium iff
			\\ $\beta \leq \frac{(2\alpha - 2)(k-1)}{k-2}$ (2)}
		
		&  \shortstack{ Equilibrium iff \\ $\beta \leq 2\alpha - 1$} (4)
		
		\\ \hline
		\shortstack{Equilibrium  \\ for any $k$}
		&  \shortstack{Equilibrium iff \\ $\beta \leq 2\alpha - 1$} (3)
		
		& \shortstack{ Equilibrium iff \\ $\beta \leq 2\alpha - 1$} (4)
		
		\\ \hline
	\end{tabular}
	\caption{Requirements of $\alpha,\beta$ for equilibrium in different Knowledge Sharing settings}
	\label{table:ks_bounds}
\end{table}

\begin{enumerate}
	\item For uniform $\sigma$ and $k=2$, setting the probability mass function $F$ for the uniform distribution into Equation~\ref{eq:cheater_eu_ks} yields:
	\[
		\frac{\lfloor \frac{\beta}{2} \rfloor - \alpha + 1}
		{\beta - \alpha + 1}
		+ \frac{1}{2}
		\bigg(
			\frac{ \lceil \frac{\beta}{2} \rceil - \lfloor \frac{\beta}{2} \rfloor}
				 {\beta - \alpha + 1}
		\bigg)
		\leq
		\frac{1}{2}
	\]
	
	\[
		\frac{\lfloor \frac{\beta}{2} \rfloor - \alpha + 1}
		{\beta - \alpha + 1}
		\leq
		\frac{1}{2}
		\bigg(
			\frac{\beta - \alpha +1 -
						\lceil \frac{\beta}{2} \rceil 
						+
						\lfloor \frac{\beta}{2} \rfloor}
					{\beta - \alpha + 1}
		\bigg)
	\]
	
	\[
		\lfloor \frac{\beta}{2} \rfloor - \alpha + 1
		\leq
		\frac{1}{2}
		\bigg(
			2 \cdot \lceil \frac{\beta}{2} \rceil - \alpha + 1
		\bigg)
	\]
	
	\[
		2 \cdot \lfloor \frac{\beta}{2} \rfloor
		- 2 \alpha + 2
		\leq
		2 \cdot \lfloor \frac{\beta}{2} \rfloor
		-\alpha + 1
	\]
	\[
		\alpha \geq 1
	\]

	This inequality holds for any $\alpha,\beta$. Thus when $k=2$ and $\sigma$ is a uniform prior, Algorithm~\ref{alg_ks} is an equilibrium for any $\alpha,\beta$. 
	
	\item When $\sigma$ is uniform and $k > 2$, setting the probability mass function $F$ for the uniform distribution into Equation~\ref{eq:cheater_eu_ks}:
	\[
	\frac{\lfloor \frac{\beta}{2} \rfloor - \alpha + 1}
	{\beta - \alpha + 1}
	+ \frac{1}{k}
	\bigg(
	\frac{ \lceil \frac{\beta}{2} \rceil - \lfloor \frac{\beta}{2} \rfloor}
	{\beta - \alpha + 1}
	\bigg)
	\leq
	\frac{1}{k}
	\]
	
	\[
	\frac{\lfloor \frac{\beta}{2} \rfloor - \alpha + 1}
	{\beta - \alpha + 1}
	\leq
	\frac{1}{k}
	\bigg(
	\frac{\beta - \alpha + 1
		+ \lfloor \frac{\beta}{2} \rfloor -  \lceil \frac{\beta}{2} \rceil}
	{\beta-\alpha + 1}
	\bigg)
	\]
	
	\[
	(\lfloor \frac{\beta}{2} \rfloor - \alpha + 1)
	\leq
	\frac{1}{k}
	\big(
	\beta - \alpha + 1
	+ \lfloor \frac{\beta}{2} \rfloor -  \lceil \frac{\beta}{2} \rceil
	\big)
	=
	\frac{1}{k}
	\big(
	2 \cdot \lfloor \frac{\beta}{2} \rfloor - \alpha + 1
	\big)
	\]
	
	\begin{equation}
	\label{eq:min_X_uniform}
	k
	\leq
	\frac{2 \cdot \lfloor \frac{\beta}{2} \rfloor - \alpha + 1}
	{\lfloor \frac{\beta}{2} \rfloor - \alpha + 1}
	= 
	1 + \frac{\lfloor \frac{\beta}{2} \rfloor}
	{\lfloor \frac{\beta}{2} \rfloor - \alpha + 1}
	\end{equation}
	
	We can then consider two cases for $\beta$:
	
	\begin{itemize}
		\item $\beta$ is even:
		\[
		k \leq 1 + \frac{\beta}{\beta - 2\alpha + 2}
		\]
		\[
			\beta \leq \frac{(2\alpha - 2)(k-1)}{k-2}
		\]
		\item $\beta$ is odd, then using the same derivation:
		\[
			k \leq \frac{\beta - 1}{\beta - 1 - 2\alpha + 2}
		\]
		\[
			\beta \leq \frac{2\alpha(k-1) - k}{k-2}
		\]
		
	\end{itemize}
	
	Notice that for any $k > 2$, the even upper bound is lower, so  Algorithm~\ref{alg_ks} is an equilibrium if and only if: 
		\[
			\beta \leq \frac{(2\alpha - 2)(k-1)}{k-2}
		\]
	
	\item We now want to know the bounds on $[\alpha,\beta]$ for which we have a general Knowledge Sharing protocol that is an equilibrium, namely, a Knowledge Sharing protocol that is an equilibrium for any $k$. Notice that we assume agents always know $k$ beforehand, but we require that the same equilibrium should apply to any $k$. We can see that the demand for an equilibrium for constant $k$, namely:
	\[
		\beta \leq \frac{(2\alpha - 2)(k-1)}{k-2}
	\]
	
	Actually shows that as a function of $k$, the upper bound for an equilibrium converges to $\beta \leq 2\alpha - 1$. In other words, for any uniform $\sigma$ with $\beta \geq 2\alpha$, there exists a $k$ large enough so that the protocol is \emph{not} an equilibrium, i.e., there exists an agent with an incentive to cheat. As stated above, if $\beta \leq 2\alpha -1$, Algorithm~\ref{alg_ks} is an equilibrium, because no agent can duplicate $d>n$ without surely being detected.

	%%%%%%%%%%%%%%%%%%%%%%%%%%%%
	% Start of Geometric prior %
	%%%%%%%%%%%%%%%%%%%%%%%%%%%%
	
	\item When $\sigma$ is geometric, consider Equation~\ref{eq:cheater_eu_ks}. Notice that if the inequality holds, Algorithm~\ref{alg_ks} is an equilibrium. If $\beta$ is even, the left hand side of the inequality is smaller. Assume that $\beta$ is even, we get that Equation~\ref{eq:cheater_eu_ks} yields that Algorithm~\ref{alg_ks} is an equilibrium when :
	\[
		F(\frac{\beta}{2}) \leq \frac{1}{k}
	\]
	
	By definition of $F$ for a geometric prior, $\prob[n=\alpha] > \frac{1}{2}$. In other words, for any $\beta \geq 2\alpha$ and any $k \geq 2$, $F(\frac{\beta}{2}) > \frac{1}{2} \geq \frac{1}{k}$. So a cheating agent $a$ has an incentive to deviate for any $k$ and any $\beta \geq 2\alpha$. The same holds for the case where $\beta$ is odd.

\end{enumerate}

%%%%%%%%%%%%%%%%%%%%%%%%%%%%%%%%%%%%%%%%%%%%%%%%%%%%%%
%%%%%%%%%%%%%%%%%%%%%%%%%%%%%%%%%%%%%%%%%%%%%%%%%%%%%%
%%%%%%%%%%%%%%%%%%%%%%%%%%%%%%%%%%%%%%%%%%%%%%%%%%%%%

\subsection{Coloring}
\label{class_color}

In this section we show that for any ring and any prior on $n$ with finite support $\sigma \subseteq [\alpha,\beta]$ in which Algorithm~\ref{alg_ks} is an equilibrium for $2$-Knowledge Sharing, there is an equilibrium for Coloring.
We describe Algorithm~\ref{alg:coloring_ring}, which uses a shared coin toss subroutine.
Using Algorithm~\ref{alg_ks}, we set each agent's input be a random bit
and define the output function $q(\vec{I}) = \sum_{i \in I} i \mod 2$.
This shared coin toss subroutine is an equilibrium as a subroutine if agents' utilities are
either a direct result of the output of the coin flip, or completely independent of it. 
	
	\begin{algorithm}[H]
		\caption{Coloring in a Ring}
		\label{alg:coloring_ring}
		\begin{algorithmic}[1]
			\State Every agent sends its $id$ to both of its neighbors
			\State Forward every $id$ to the other direction until all agents have $\vec{id} = id_0,..., id_{n'-1}$ and agents know $n'$, which may include duplicates.
			$\vec{id}$ is sent across the ring for validation. If an agent finds an inconsistency it outputs $\bot$.
			Using the $id$s agents have agreed on a global orientation around the ring: The orientation of the ring is set to be from the agent with the highest $id$ to its neighbor with the lowest $id$, either clockwise or counter-clockwise.
			\State Every agent sends its preference $p_a$ to both its neighbors.
			\State Forward every preference until all agents have $\vec{p} = p_0,...,p_{n'-1}$.
			\State $\vec{p}$ is sent across the ring again. If an agent finds an inconsistency it outputs $\bot$.
			\If {$n'$ is odd and all agents have the same preferred color:}
			\begin{enumerate}
				\item[a.] Every agent $a$ sends a random integer $w_a \in [0,n'-1]$ in the direction of the orientation (clockwise or counter-clockwise).
				\item[b.] Agents forward the $w_a$s they receive until they know all $\vec{w} = w_0,..., w_{n'}$.
				\item[c.] $\vec{w}$ is sent across the ring. If an agent finds an inconsistency it outputs $\bot$ and aborts.
				\item[d.] Every agent knows $s$, the "sink", is the $w=(\sum_a w_a) \mod n'$ th largest $id$.
			\end{enumerate}
			\EndIf
			\State Agents flip a shared coin, denoting its result $b$, using $2$-Knowledge Sharing.
			\State Let $C$ be a group of consecutive agents with the same preferred color, excluding $s$ if a sink exists,
			and denote $c_1,..., c_{|C|}$ the agents in $C$ by order of the orientation.
			If $C$ is the entire ring (i.e., the ring is even and all agents prefer the same color),
			let $c_1$ be the agent with maximum $id$. (This step is merely for notation purposes, agents do not actively exchange messages as they can deduce these groups from the orientation and the preference vector $\vec{p}$).
			\State For every group $C$,
			each agent $c_i$, s.t., $i \mod 2 = b$ (except $s$ in case there is a sink) where $b$ is the result of the shared coin from Step 7,
			sets its output to its preferred color $o_{c_i} = p_{c_i}$ and sends it to its neighbors.
			An agent with no neighbors that prefer its color, i.e., agent $c_1$ for any group $C$ such that $|C|=1$,
			also outputs its preferred color and sends it to its neighbors in this round.
			\State All agents which did not output a color yet set their output to the minimum available color and send it to their neighbors. The only agents that do not output at this step are:
			\begin{itemize}
				\item $s$ if a sink exists, i.e., the ring is odd and all agents prefer the same color $p_a$.
				\item The last agent (according to the orientation) in every consecutive group of agents with the same preference, i.e., for every group $C$, the agent $C_{|C|}$. 
			\end{itemize} 
			\State For every group $C$,
			each agent $c_{|C|}$ that did not set its output yet,
			sets its output to the minimum available color and sends it to its neighbors.
			\State If a sink $s$ exists, it sets its output to the minimal available color.
		\end{algorithmic}
	\end{algorithm}

	\begin{theorem}
		\label{thm:ring_coloring_legal}
		Algorithm~\ref{alg:coloring_ring} always ends in a legal coloring of the ring
	\end{theorem}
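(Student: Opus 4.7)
The plan is to establish two things under a faithful execution of Algorithm~\ref{alg:coloring_ring}: (i) no agent outputs $\bot$, and (ii) every edge of the ring is properly colored. For (i), the only places where $\bot$ can be emitted are the cross-checks of $\vec{id}$, $\vec{p}$, and $\vec{w}$ in lines 2, 5, and 6c, which compare identical copies broadcast in opposite directions around the ring; under faithful execution these broadcasts agree, so no agent aborts. A direct inspection of lines 9--12 then shows that every non-sink agent commits a color in exactly one of lines 9, 10, or 11, and the sink (if any) commits in line 12; hence every agent outputs a color.

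The core of the argument is to show $o_u \neq o_v$ for every edge $(u,v)$ by case analysis on the location of $u$ and $v$. \emph{Case (A): same group.} Here $u = c_i, v = c_{i+1}$ for some $i$, or, in the wrap-around case where the group $C$ is the entire even-length ring, $u = c_1, v = c_{|C|}$. The two indices have opposite parities, so exactly one of $u, v$ matches $b$ and outputs the common preference $p$ in line 9; the other is dispatched to line 10 or (when it equals $c_{|C|}$) to line 11, has by then observed $p$ on its neighbor, and picks a minimum available color different from $p$. \emph{Case (B): different groups.} Here $u = c_{|C|}$ and $v = c'_1$ with $p_u \neq p_v$. If both output their preferences in line 9 the colors differ automatically. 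Otherwise, scheduling guarantees that the endpoint committing later---most delicately, $c_{|C|}$ deferred to line 11 while $c'_1$ is handled in line 9 or line 10---observes the other's color and applies the min-available rule to avoid it. \emph{Case (C): sink.} The sink $s$ is scheduled at line 12, strictly after both of its neighbors $c_1$ and $c_{|C|}$, sees their colors, and picks a distinct minimum available color.

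The main obstacle will be keeping the scheduling analysis exhaustive: one must verify that no pair of adjacent agents is dispatched to the same line without prior knowledge of each other's color. At cross-group edges, the only potential collision would be two neighboring agents both assigned in line 10, but this is ruled out because every cross-group adjacency contains a $c_{|C|}$-vertex, which is explicitly excluded from line 10. Within a group, the parity of indices forces alternation between line 9 and lines 10 or 11. Singletons, where $c_1 = c_{|C|}$ is forced to line 9 by the dedicated clause for $|C|=1$, must be threaded through Cases (A) and (B) but introduce no new difficulty: singleton groups have no internal edges, so only Case (B) applies to them, and the singleton's line-9 color is visible to both its cross-group neighbors by the time they commit. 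Once these scheduling invariants are checked, the min-available rule succeeds automatically because each ring vertex has degree two and the color palette is unbounded.
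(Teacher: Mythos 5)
Your proposal is correct and follows essentially the same route as the paper's proof: every agent commits in exactly one of steps 9--12, adjacent agents are scheduled so that the later one sees the earlier one's color (or, if both commit at step 9, they lie in different preference groups and output distinct colors), and the min-available rule then guarantees properness. If anything, your case analysis is more careful than the paper's, which flatly asserts that adjacent agents always commit in different rounds and leaves the simultaneous step-9 cross-group case and the singleton case implicit.
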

	
	\begin{proof}
		\begin{enumerate}
			\item Every agent outputs a color. All agents Know the orientation and the order of agents around the ring according to it, and the vector of preferences. Out of these they can deduce their preference group $C$ and their position in it. All agents with parity $b$ output a color at step $9$. The rest of the agents output at step $10-11$. The only exception is the sink, if there is one, which outputs at step $12$. Any agent either has an even parity, an odd parity or is the sink. Thus any agent has chosen and output a color.
			\item For any neighbors $a_1, a_2$, the round in which $a_1$ chooses its color is different from the round $a_2$ chooses its color. This is immediate: Since they are neighbors, their parity is different, so if neither of them is the sink, one would choose a color at step $9$ and the other at step $10-11$. Separation of steps $10$ and $11$ is merely to prevent a monochromatic edge between two different preference groups when both "edge" agents did not receive their desired color (and outputting the minimal available color may collide). In this case, both agents already have a sure utility of $0$. If one of them is the sink, assume w.l.o.g. it is $a_1$, then it chooses a color at step $12$, and there is only one sink so $a_2$ chooses at step $9$ or steps $10-11$. 
			\item No agent would choose a color already taken. This is immediate from Solution Preference. 
		\end{enumerate}
		
	\end{proof}
	
	\begin{theorem}
		\label{theorem:coloring:bound}
		Let G be a ring network and $\sigma$ a finite-support prior that agents have on $n$. If Algorithm~\ref{alg_ks} is an equilibrium for $2$-Knowledge Sharing in $G$, then Algorithm~\ref{alg:coloring_ring} is an equilibrium for Coloring in G.
	\end{theorem}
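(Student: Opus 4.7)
The plan is to consider an arbitrary agent $a$ and an arbitrary deviating strategy, and show its expected utility cannot exceed that of following the protocol. I would split the possible deviations into two classes: non-duplicating deviations within the protocol steps, and Sybil-style duplications in which $a$ pretends to be several virtual agents.

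For non-duplicating deviations, the structure of the protocol makes each one detectable or unprofitable. The three shared vectors $\vec{id}$, $\vec{p}$, and $\vec{w}$ are each disseminated once and then sent around the ring a second time for validation, so any lie by $a$ about its own $id$, preference, or drawn $w_a$ is caught by some honest agent, which outputs $\bot$ and yields utility $0$ by Solution Preference. Deviations in the shared coin flip of step~7 are, by the theorem's hypothesis, not profitable, since that step is exactly an instance of $2$-Knowledge Sharing executed via Algorithm~\ref{alg_ks}. Finally, outputting a color other than the prescribed one either produces an illegal coloring (detected by a neighbor, yielding utility $0$) or forfeits $p_a$ (yielding utility $0$ anyway).

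For duplications, agent $a$ pretending to be $d$ virtual agents induces $n' = n + d - 1$, which is visible through the length of $\vec{id}$; with probability $P_D = \Pr[n > \beta - d + 1]$ this exceeds $\beta$ and honest agents output $\bot$. Conditioning on non-detection, I would argue that the cheater's remaining levers reduce to three: choosing its own virtual parity inside its preference group, turning its virtual self into a singleton by surrounding it with duplicates of differing preferences, and skewing the coin flip $b$ via the embedded $2$-KS subroutine. The third is explicitly forbidden by the hypothesis. The first cannot raise the expected utility because $b$ remains uniformly distributed from $a$'s perspective (again by the hypothesis), so any parity $a$ chooses matches $b$ with probability $1/2$, identical to the baseline for non-singleton agents. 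The second is the delicate case: becoming a virtual singleton forces the duplicates surrounding $a$ to carry preferences different from $p_a$, which can in turn cause a real neighbor of $a$ in $G$ to output $p_a$ itself (for instance when that real neighbor is a virtual singleton whose preference is $p_a$), yielding a real-graph conflict and utility $0$.

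The main obstacle will be formalizing the expected-utility comparison for the singleton lever. The approach I would pursue is a reduction to a coin-flip manipulation: for every duplication scheme that produces a strictly positive net gain from turning $a$ into a virtual singleton, one can extract a corresponding deviation in the embedded $2$-KS instance that yields strictly positive gain there as well, contradicting the hypothesis that Algorithm~\ref{alg_ks} is an equilibrium for $2$-Knowledge Sharing in $G$. Combined with the $(1 - P_D)$ weighting from detection risk and the vanishing contribution of the sink case (which arises only when all preferences coincide and $n'$ is odd), this yields $\ev_{s}[u_a] \leq \ev_{\Gamma}[u_a]$ for every deviation $s$, completing the proof.
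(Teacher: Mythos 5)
Your skeleton (non-duplicating deviations handled step by step, duplications handled via the $2$-KS hypothesis) matches the paper's, but two of your key steps do not hold up. First, you dismiss the coin-flip lever with ``explicitly forbidden by the hypothesis.'' That is only true for $d \leq n$: Theorem~\ref{theorem:ks-limit} guarantees Algorithm~\ref{alg_ks} is an equilibrium only against cheaters pretending to be at most $n$ agents, so a cheater with $d > n$ that escapes detection \emph{can} fix $b$ and secure $p_a$ outright. The hypothesis does not forbid this; what it actually yields (and what the paper extracts as its Claim~3) is that the detection probability for $d > n$ must be at least $\frac{1}{2}$ --- otherwise the same duplication would already be profitable in the standalone $2$-KS instance. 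To convert that into an equilibrium statement for Coloring you then need a matching lower bound $\ev_{\Gamma}[u_a] \geq \frac{1}{2}$ on the honest utility, which the paper obtains by an explicit computation ($\prob[A] \leq \frac{5}{9}$ using the existence of at least three colors and the uniform prior on preferences, plus the $\frac{1}{n}$ sink probability). Your proposal contains neither the $\frac{1}{2}$ detection bound nor this utility computation, and without both the $d > n$ case remains open.

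Second, your proposed reduction of the ``singleton lever'' to a deviation in the embedded $2$-KS instance is unlikely to go through: making one's virtual agent a singleton causes it to output $p_a$ at step~9 \emph{unconditionally}, independently of $b$, so a hypothetical gain from it is not a coin-flip manipulation and cannot be transported into the $2$-KS game. The paper instead handles all $d \leq n$ duplications by direct case analysis (its Claim~4): whatever grouping the duplicates induce, whether $a$ ends up with $p_a$ still hinges on whether its \emph{real} neighbors claim $p_a$ at step~9, which is governed by the unbiased coin and the sink selection that the cheater cannot influence when $d \leq n$; the success probability therefore stays at $\frac{1}{2}$, or the utility is $1$ regardless of the duplication. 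You would also need an analogue of the paper's Claim~2 (the cheater effectively commits to $d$ before learning $n$) to justify treating $n$ as random in your $P_D$ term rather than letting the cheater adapt $d$ to the observed ring size.
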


	%%%%%%%%%%%%%%%%%%%%%%% New proof (5 claims)  %%%%%%%%%%%%%%%%%%%%
	
	\begin{proof}
		Denote $\ev_{\Gamma}[u_a]$ the expected utility for agent $a$ by following Algorithm~\ref{alg:coloring_ring}. Our proof uses $4$ claims:
		\begin{claim}[1]
			For any agent $a$ and any strategy $T$ that does not include duplication, $\ev_{\Gamma}[u_a] \geq \ev_{T}[u_a]$. In other words, no agent has an incentive to deviate without duplication.
		\end{claim}
		\begin{claim}[2]
			For any agent $a$ and any duplicating strategy $D$, let $t$ be the round in which $a$ learns $n$. Then one of the following holds:
			\begin{enumerate}
				\item $a$ had already stopped duplicating at round $t$, i.e., $d < n$.
				\item $a$ had duplicated at least $n$ agents.
			\end{enumerate}
		\end{claim}
		\begin{claim}[3]
			If $d>n$ then the probability that $a$ is detected and the algorithm is aborted is at least $\frac{1}{2}$.
		\end{claim}
		\begin{claim}[4]
			If $d \leq n$, the expected utility for $a$ is at most the same as following the protocol without duplication.
		\end{claim}

		These 4 claims suffice to prove that Algorithm~\ref{alg:coloring_ring} is an equilibrium: If $a$ does not duplicate, it has no incentive to deviate (Claim 1), so any possible cheating strategy involves duplication. Let $d$ be the number of duplications $a$ disguises itself as. According to Claim 2, $a$ decides whether to duplicate into $d > n$ before learning $n$. Then one of the following holds:
		\begin{enumerate}
			\item $d \leq n$. Then by Claim 4 the cheater's expected utility by deviating is at most that of following the protocol without deviation.
			\item $d > n$. By Claim $3$, the probability of detection is at least $\frac{1}{2}$. In other words, the probability the duplication is \emph{not} detected is at most $\frac{1}{2}$. Thus, even if $d>n$ guarantees $a$ a utility of $1$ when it is not detected, the expected utility from duplicating into $d$ agents is at most $\frac{1}{2}$. 
		\end{enumerate}
		
		Thus for any $d$, the expected utility from deviating using duplication is at most $\max \{\frac{1}{2}, \ev_{\Gamma}[u_a]\}$.
		
		On the other hand, consider the expected utility for agent $a$ following Algorithm~\ref{alg:coloring_ring} with preferred color $p_a$:

		\begin{itemize}
			\item Let $\prob[u]$ be the probability given by $a$, according to its prior beliefs, that $n$ is odd and all agents prefer $p_a$.
			\item Let $\prob[A]$ be the probability given by $a$, according to its prior beliefs, that at least one of its neighbors prefer $p_a$.
			\item Let $\prob[s]$ be the probability that $a$ is selected as the sink in a sink-election process. 
		\end{itemize}
		
		Now consider the expected utility for $a$ before it learns $n$, i.e., before the \emph{Wake-Up} is complete:
		If none of its neighbors prefer $p_a$, $a$ will be colored $p_a$, thus its utility in this case will be $1$. If at least one neighbor prefers $p_a$, one of the following must hold:
		\begin{itemize}
			\item Agents are in an odd ring and all agents prefer $p_a$. Then $a$ is colored $p_a$ if it is not the sink \emph{and} it wins in the coin flip.
			\item Otherwise, $a$ is colored $p_a$ if it wins the coin flip.
		\end{itemize}
		
		Recall that $\ev_{\Gamma}[u_a]$ denotes the expected utility for $a$ by following Algorithm~\ref{alg:coloring_ring}. Then we have: 
		
		\begin{multline}
		\label{eq:ring_color_rules}
		\ev_{\Gamma}[u_a]
		= 
		(1-\prob[A]) \cdot 1 + \prob[A]
		\bigg(
		\prob[u|A](1-\prob[s])\frac{1}{2}
		+
		(1-\prob[u|A]) \frac{1}{2})
		\bigg)
		= \\ =
		1 - \prob[A] + \frac{\prob[A]}{2}
		\bigg(
		\prob[u|A] - \prob[u|A]\prob[s] + 1 - \prob[u|A]
		\bigg)	
		= \\ =
		1-\prob[A] - \frac{\prob[A] \cdot \prob[u|A]\prob[s]}{2} + \frac{\prob[A]}{2}
		= \\ =
		1 - \frac{\prob[A]}{2} -  \frac{\prob[A] \cdot \prob[u|A]\prob[s]}{2}
		\end{multline}
		
		Recall that we assume the agents' prior on the preferences of other agents is uniform. Denote $c$ the set of possible color preferences. so $\prob[A] = 1 - (\frac{c-1}{c})^2$. 
		Since agents know there are at least $3$ colors (otherwise, there would be no coloring for an odd ring), $\prob[A] = 1 - (\frac{c-1}{c})^2 \leq \frac{5}{9}$. Furthermore, in a sink selection process where agents follow the protocol the probability that $a$ is selected as the sink is $\prob[s] = \frac{1}{n}$. 
		
		Thus we can bound the term in Equation~\ref{eq:ring_color_rules} from below:
		\[
		\ev_{\Gamma}[u_a] = 
		1 - \frac{\prob[A]}{2} -  \frac{\prob[A] \cdot \prob[u|A]\prob[s]}{2}
		\geq 1 - \frac{\frac{5}{9}}{2} - \frac{\frac{5}{9}}{2} \frac{1}{n} \prob[u|A]
		=
		\frac{13}{18} - \frac{5}{18n} \prob[u|A]
		\]
		
		Finally, $\prob[u|A] \leq 1$ and $n \geq 2$ (otherwise, $a$ is a lone agent). Thus we complete the bound:
		\[
		\ev_{\Gamma}[u_a] \geq \frac{13}{18} - \frac{5}{18}\frac{\prob[u|A]}{n}
		\geq
		\frac{1}{2}
		\]
		
		So following Algorithm~\ref{alg:coloring_ring} gives $a$ an expected utility of at least $\frac{1}{2}$, and any deviation has an expected utility of at most  $\max \{\frac{1}{2}, \ev_{\Gamma}[u_a]\}$, so Algorithm~\ref{alg:coloring_ring} is an equilibrium.

		It remains to prove the $4$ claims above.

		\begin{claim}[1]
			For any agent $a$ and any strategy $T$ that does not include duplication, $\ev_{\Gamma}[u_a] \geq \ev_{T}[u_a]$. In other words, no agent has an incentive to deviate without duplication.
		\end{claim}
		
		\begin{proof}
			Consider the steps of the algorithm from end to start. Let $a$ be an arbitrary agent. If $a$ chooses a color at step $9$, then none of its neighbors has yet to set its color. So $a$ will set $p_a$ and get utility $1$ as decreed by the protocol, so it has no incentive to deviate. If it outputs at step $10-12$, then if any of its neighbors output $p_a$, $a$'s utility is already guaranteed to be $0$ and it has no incentive to deviate. If none of the agents set $p_a$ as their output, $a$ can set it by following the protocol, getting a utility of $1$.
			Then by step $9-12$ there is no incentive to deviate. $a$ can only increase its expected utility by ensuring it selects a color \emph{before} its neighbors. 
			
			At step $7$, in the coin flip process, $a$ knows the size of its consecutive group of agents and its position in it according to the orientation. But by our assumption, the shared coin flip is an equilibrium when agents' utility is directly decided by the result of the coin flip. So there is no deviating strategy for $a$ that increases its expected utility.
			
			At step $6$, $a$ sends $w_a$ before learning any other $w_i$ values, and is thus committed to it and cannot affect the outcome. Furthermore, any deviation regarding other agents' values is caught in $6.c.$ when $\vec{w}$ is validated.
			
			When sharing the preferences, agent $a$ can either send $p_a$ or $p \neq p_a$. any other behavior is detected and the detecting agent would abort. If $a$ sends $p \neq p_a$ and one of its neighbors prefers $p_a$, that neighbor may choose $p_a$ regardless of the shared toss later (it does not know of $a$'s preference). So $a$ increases the probability of an illegal coloring. If that neighbor is part of a larger group of consecutive agents that prefer $p_a$, then $a$ changed nothing: The neighbor would output $p_a$ with probability $\frac{1}{2}$ (according to the shared coin later). If no neighbor prefers $p_a$, $a$ would surely output $p_a$ by sending its preference. Thus it has no incentive to send $p \neq p_a$.
			
			In steps $1-2$, agents share $id$s and decide on an orientation. The expected utility is invariant to the resulting orientation from these steps, as the probability of being the sink and the probability of "winning" the coin flip are identical for any orientation and $id$s.	
		\end{proof}
		
		\begin{claim}[2]
			For any agent $a$ and any duplicating strategy $D$, let $t$ be the round in which $a$ learns $n$. Then one of the following holds:
			\begin{enumerate}
				\item $a$ had already stopped duplicating at round $t$, i.e., $d < n$.
				\item $a$ had duplicated at least $n$ agents.
			\end{enumerate}
		\end{claim}
		
		\begin{proof}
			At the first step of the algorithm, agents send their $id$s to both sides of the ring and forward them. 
			When $a$ duplicates into more than one agent, it has two connecting points, one on each edge of the group of duplicated agents. At every round, $a$ sends two fake $id$s, one across each frontier, and receives two real $id$s. 
			
			Assume $a$ duplicates until it learns $n$.
			In the first round, it sends $2$ $id$s and receives $2$. This goes on until it receives an $id$ it had already received (in case $n$ is even, it receives the same $id$ on both frontiers).  Let $t$ be the round that this occurs. At round $t$, $a$ receives $1$ $id$ it has not already learned (if $n$ is even) or $0$ (if $n$ is odd). But at round $t-1$, $a$ did not know this is about to happen: So $a$ still sent $2$ new "duplicate" $id$s. Thus it created a total of $2t$ duplicate agents, when there are $2t-1$ or $2t-2$ honest agents. The number of honest agents is $n-1$, so $2t > n-1$ and thus $d \geq n$. 
		\end{proof}
		
		\begin{remark}
			This shows that there is a strategy for $a$ in which it duplicates exactly $n+1$ agents: It duplicates until it learns $n$ and then adds one additional agent. However, consider all possible values for $n$: This strategy is beneficial in exactly the same cases as $d = \lfloor \frac{\beta}{2} \rfloor + 1$, and fails in the same cases. Thus it is in fact equivalent or worse, depending on the size of the ring.
		\end{remark}
		
		\begin{claim}[3]
			If $d > n$ then the probability that $a$ is detected and the algorithm is aborted is at least $\frac{1}{2}$.
		\end{claim}
		
		\begin{proof}
			Assume in contradiction that if $d > n$ the probability that $a$ is detected and the algorithm is aborted is strictly less then $\frac{1}{2}$. 
			This probability is only dependent on $\sigma$. But if $\sigma$ is such that the probability of detection is less than $\frac{1}{2}$, then in the same ring running Algorithm~\ref{alg_ks} for $2$-Knowledge Sharing, agents have an incentive to duplicate, contradicting our assumption that Algorithm~\ref{alg_ks} is an equilibrium when agents prior on $n$ is $\sigma$.
		\end{proof}
		\begin{claim}[4]
			If $d \leq n$ under duplication strategy $D$, then $\ev_{\Gamma}[u_a] \geq \ev_{D}[u_a]$, i.e., the expected utility from the duplicating strategy is at most the expected utility for $a$ from following $\Gamma$.
		\end{claim}
		Assume $a$ duplicated into $d \leq n$ agents. First, consider the case where $n$ is odd and all agents prefer $p_a$. 
		
		The only increase in $a$'s expected utility occurs if it increases the probability that its neighbors, denoted $a^-, a^+$ choose a color at round $10-12$ and not at round $9$. If at least one chooses a color at round $9$, it would choose $p_a$ and $a$ gets a utility of $0$. The steps in which agents choose a color are decided by the coin flip $b$ and the identity of the sink. 
		
		We look at the following cases:
		\begin{itemize}
			\item $d$ is odd and neither of $a^-, a^+$ are the sink. In this case, the parity of $a^-, a^+$ is different. Thus, for any coin flip result, at least one of $a^-, a^+$ choose a color at step $9$ and $a$'s utility is $0$. 
			\item $d$ is even, or $d$ is odd and one of $a^-, a^+$ is the sink. assume w.l.o.g. that $a^-$ is the sink. Since $d \leq n$, $a$ cannot affect the outcome of the coin flip, and so the probability that $a^+$ loses the coin flip (and thus $a$ can output $p_a$) is $\frac{1}{2}$. But this is the same expected utility as $a$ would have had it not duplicated at all.
		\end{itemize} 
	
		Notice that even if one of $a$'s neighbors is the sink, $a$'s expected utility does not change! i.e., For any $s$, the expected utility for a is at most $\frac{1}{2}$.
		
		Now, consider the case where there is no sink selection, i.e., the ring is even or not all agents prefer the same color $p_a$. Notice that since there are two frontiers, $a$ can separate the groups of consecutive agents that prefer $p_a$ using the preferences of its duplicates, but still its only effect is to decide whether the parity of $a^+, a^-$ in their respective $C$ groups is the same. Again, if $d$ is even, the parity of $a$'s neighbors is different. If both neighbors want $p_a$, $a$ surely loses because at least one neighbor declares $p_a$ at step $9$, and a cannot output $p_a$ (as a single agent). For each neighbor that does not want $p_a$, $a$'s expected utility is unaffected by its parity. So $a$ would surely duplicate to an even $d$. But still, since $d \leq n$, $a$ cannot affect the coin flip and thus the probability that it wins is exactly $\frac{1}{2}$. That is unless $a^-, a^+$ both prefer $p \neq p_a$, but then $a$ would surely get $p_a$ regardless of the duplication - its utility is $1$.
	\end{proof}

%%%%%%%%%%%%%%%%%%%%%%%%%%%%%%%%%%%%%%%%%%%%%%%%%%%%%%
%%%%%%%%%%%%%%%%%%%%%%%%%%%%%%%%%%%%%%%%%%%%%%%%%%%%%

\subsection{Leader Election}
\label{class_LE}

In the Leader Election problem, each agent $a$ outputs $o_a \in \{0,1\}$,
where $o_a = 1$ means that $a$ was elected leader and $o_a = 0$ means otherwise.
$\Theta_L = \{O | \exists a: o_a = 1, \forall{b \neq a}: o_b = 0\}$.
An agent prefers either $0$ or $1$.

Recall that any Leader Election algorithm must be \emph{fair} \cite{DISC13/ADH}, i.e.,
every agent must have equal probability of being elected leader for the algorithm to be an equilibrium.

\begin{theorem}
	\label{theorem:le_uniform}
	In a graph where agents' prior on $n$ is uniform, an equilibrium exists if and only if $\beta \leq \alpha + 1$
\end{theorem}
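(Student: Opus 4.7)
The proof splits naturally into two directions: sufficiency of $\beta \leq \alpha + 1$ and necessity. For the sufficiency direction, when $\beta = \alpha$ the size $n$ is effectively known and the fair Leader Election protocol of \cite{DISC13/ADH} is an equilibrium. When $\beta = \alpha+1$, I would run that same fair protocol preceded by Wake-Up, so that agents abort as soon as the observed size $n'$ exceeds $\beta$. Consider a cheater $a$ that duplicates into $d \geq 2$ agents. If $d \geq 3$, then $n + d - 1 \geq \alpha + 2 > \beta$ deterministically, the algorithm aborts, and the cheater's utility is $0$. Hence the only candidate deviation is $d = 2$; it goes undetected exactly when $n = \alpha$, which under the uniform prior has probability $\tfrac12$. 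Conditional on not being detected, fairness on the perceived $(n+1)$-agent network gives the cheater probability $\tfrac{2}{\alpha+1}$ of being elected, so (taking $p_a = 1$) $\ev_{D,0}[u_a] = \tfrac{1}{\alpha+1}$, whereas honest behavior yields $\ev_{\Gamma}[u_a] = \tfrac12\bigl(\tfrac1\alpha + \tfrac1{\alpha+1}\bigr) = \tfrac{1}{\alpha+1} + \tfrac{1}{2\alpha(\alpha+1)}$, strictly larger. For $p_a = 0$ any duplication only raises the chance of being elected while introducing detection risk, so no deviation helps there either.

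For the necessity direction, suppose $\beta \geq \alpha + 2$ and let $\Lambda$ be an arbitrary Leader Election protocol. By the fairness result of \cite{DISC13/ADH}, every equilibrium protocol must select each (perceived) agent uniformly at random; in particular a cheater who undetectedly appears as $d$ agents in a network of perceived size $n + d - 1$ is elected with probability $\tfrac{d}{n+d-1}$. For a cheater with $p_a = 1$ that chooses $d = 2$, detection happens exactly when $n = \beta$, so
\begin{equation*}
  \ev_{D,0}[u_a] \;=\; \frac{1}{\beta-\alpha+1}\sum_{k=\alpha}^{\beta-1}\frac{2}{k+1} \;=\; \frac{2}{\beta-\alpha+1}\Bigl(\sum_{k=\alpha}^{\beta}\tfrac{1}{k} \;-\; \tfrac{1}{\alpha}\Bigr),
\end{equation*}
while $\ev_{\Gamma}[u_a] = \frac{1}{\beta-\alpha+1}\sum_{k=\alpha}^{\beta}\tfrac{1}{k}$. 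The deviation is therefore strictly profitable iff $\sum_{k=\alpha+1}^{\beta}\tfrac{1}{k} > \tfrac{1}{\alpha}$. In the extremal case $\beta = \alpha + 2$ this reduces to $\tfrac{1}{\alpha+1}+\tfrac{1}{\alpha+2} > \tfrac{1}{\alpha}$, i.e.\ $\alpha^{2} > 2$, which holds since $2$-vertex-connectedness forces $\alpha \geq 3$. For larger $\beta$ the left-hand side only grows, so the inequality continues to hold and $\Lambda$ cannot be an equilibrium.

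The main obstacle I anticipate is invoking the fairness of Leader Election cleanly enough to make the necessity argument protocol-independent, rather than tied to a particular algorithm: fairness must be applied in the duplicated ``perceived'' graph to justify the $\tfrac{d}{n+d-1}$ win probability of a cheater. Once that is in place, the remaining work is purely bookkeeping — aligning the detection event $n > \beta - d + 1$ with the uniform prior and isolating the harmonic-sum inequality above, which cleanly separates $\beta = \alpha + 1$ (where equality-style slack makes honesty win) from $\beta = \alpha + 2$ (where the single extra term $\tfrac{1}{\alpha+2}$ already tips the balance in favor of duplicating).
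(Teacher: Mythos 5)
Your proposal is correct and follows essentially the same route as the paper: sufficiency via the fair protocol with the $d=2$ duplication computation ($\tfrac{1}{2}\cdot\tfrac{2}{\alpha+1}$ versus $\tfrac12(\tfrac1\alpha+\tfrac1{\alpha+1})$), and necessity by invoking fairness from ADH'13 and showing the $d=2$ deviation beats honesty exactly when $\sum_{k=\alpha+1}^{\beta}\tfrac1k>\tfrac1\alpha$, starting from $\beta=\alpha+2$. Your write-up is in fact slightly cleaner on the bookkeeping (the paper's displayed formula for $\ev_D[u_a]$ has a misplaced normalizing factor $\tfrac{\beta-\alpha}{\beta-\alpha+1}$, and it asserts the final inequality ``for any $\alpha$'' where your reduction to $\alpha^2>2$ plus $2$-connectivity is the precise justification).
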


\begin{proof}
	If $\beta \leq \alpha + 1$, we show a \emph{fair} leader election protocol (as shown in \cite{Afek:2014:DCB:2611462.2611481,Abraham:2011}) is an equilibrium. In any fair leader election protocol any agent has a probability of $\frac{1}{n'}$ to be elected. If an agent follows the protocol, its expected utility is $\frac{1}{n}$, i.e., it is $\frac{1}{2}\frac{1}{\alpha} + \frac{1}{2}\frac{1}{\alpha + 1}$.
	
	If an agent duplicates into more than $2$ agents ($d>2$) it is surely caught.
	If an agent duplicates into $2$ agents (i.e., $d=2$) then $n' = n+1$.
	This leads to two equally possible scenarios:
	\begin{itemize}
		\item $n = \alpha$,
		thus the probability that one of $a$'s duplicates is elected leader is $\frac{2}{\alpha+1}$.
		\item $n = \alpha + 1$, thus $n' = \alpha + 2 > \beta$
		and the deviation is detected, resulting in a utility of $0$.
	\end{itemize}
	Overall, the duplication strategy has an expected utility of
	$\frac{1}{2} \cdot \frac{2}{\alpha+1} = \frac{1}{\alpha + 1}$.
	Thus, duplicating into $2$ agents does not increase $a$'s utility.
	
	When $\beta > \alpha + 1$, we show that any fair leader election is \emph{not} an equilibrium.
	Note that the expected utility from following a fair leader election protocol is:
	\[
		\ev_{n \sim \sigma} [\frac{1}{n}] 
		= 
		\frac{1}{\beta-\alpha + 1} \cdot \sum_{\alpha}^{\beta} \frac{1}{n}
	\]
	
	Now consider $\beta = \alpha + 2$. If $a$ duplicates into $d = 2$ agents, consider its expected utility for every possible $n$:
	\begin{itemize}
		\item If $n = \alpha$, $a$ enters the fair leader election with $2$ possible winning candidates out of $\alpha + 1$, and thus its expected utility is $\frac{2}{\alpha + 1}$.
		\item If $n = \alpha + 1$, $a$'s expected utility is $\frac{2}{\alpha + 2}$
		\item If $n = \alpha + 2 = \beta$, $a$'s duplication is detected and its utility is $0$. 
	\end{itemize}
	Overall, this yields an expected utility of $\frac{2}{3} (\frac{1}{\alpha + 1} + \frac{1}{\alpha + 2})$, greater than the legal utility for $\beta = \alpha +2$ for any $\alpha$. 
	
	For any increase in $\beta$, we have another case where the cheater is not detected and earns an expected utility of $\frac{2}{n'}$. Thus the cheater's expected utility is:
	\[
		\ev_{D}[u_a] = \frac{\beta-\alpha}{\beta-\alpha + 1} 
		\big(
			\frac{2}{\alpha+ 1} + \dots + \frac{2}{\beta}
		\big)
		=
		2 \frac {\beta - \alpha}{\beta - \alpha + 1} \cdot \sum_{n'=\alpha + 1}^{\beta} \frac{1}{n'}		
	\]
	
	For any $\beta > \alpha + 1$, this is greater than the expected utility from following the protocol.
	
	%\new{Thus for $f(\alpha) = \alpha + 1$ the algorithm presented in %\cite{DISC13/ADH} is an equilibrium, 
%	while for $f(\alpha) = \alpha + 2$ no algorithm for Leader Election is an equilibrium.
\end{proof}

\begin{theorem}
	\label{theorem:le_geometric}
	In a graph where agents' prior on $n$ is geometric, an equilibrium exists if and only if $\beta = \alpha$.
\end{theorem}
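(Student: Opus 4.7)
The plan is to prove both directions of the iff. The ``if'' direction, $\beta = \alpha$, is immediate: agents know $n$ exactly, so the fair leader election protocol of \cite{DISC13/ADH,Afek:2014:DCB:2611462.2611481} is an equilibrium, and any duplication is caught right away since $n' > \alpha = \beta$, forcing utility $0$.

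For the ``only if'' direction I would argue the contrapositive: whenever $\beta \geq \alpha + 1$, no fair leader election is an equilibrium, and since every leader election equilibrium must be fair \cite{DISC13/ADH}, this rules out any equilibrium at all. The candidate deviation is the minimal duplication, $d = 2$, used by an agent $a$ whose preferred output is to be the leader. Following the protocol honestly yields $\ev_{\Gamma}[u_a] = \sum_{n = \alpha}^{\beta} f(n)/n$, while duplicating yields $\ev_{D}[u_a] = \sum_{n = \alpha}^{\beta - 1} f(n) \cdot 2/(n + 1)$, because the cheater is detected exactly when $n = \beta$ (since then $n' = \beta + 1 > \beta$) and otherwise holds two of the $n + 1$ identities in the fair election.

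The quantitative core of the argument is to show $\ev_{D}[u_a] - \ev_{\Gamma}[u_a] > 0$. Term-by-term, each $n \in [\alpha, \beta - 1]$ contributes the strictly positive quantity $(n - 1)/(n(n + 1))$ (recall $\alpha \geq 3$ since $G$ is $2$-vertex-connected, so the smallest admissible graph is a triangle), while $n = \beta$ costs $f(\beta)/\beta$. Keeping only the $n = \alpha$ gain gives the clean sufficient bound
\[
	\ev_{D}[u_a] - \ev_{\Gamma}[u_a] \;\geq\; f(\alpha) \cdot \frac{\alpha - 1}{\alpha(\alpha + 1)} \;-\; \frac{f(\beta)}{\beta}.
\]
For the delicate base case $\beta = \alpha + 1$, the factor-$\tfrac12$ geometric definition gives $f(\alpha) = 5/8$ and $f(\beta) = 3/8$, collapsing the bound to $(2\alpha - 5)/(8\alpha(\alpha + 1))$, strictly positive for $\alpha \geq 3$. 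For $\beta > \alpha + 1$ the loss $f(\beta)/\beta$ shrinks geometrically in $\beta - \alpha$ while additional positive gain terms from intermediate $n$ enter the sum, so the inequality only strengthens; I would formalize this with a short monotonicity argument noting that $h(n) = (n - 1)/(n(n + 1))$ is decreasing on $n \geq 3$, which lets me subsume all $\beta \geq \alpha + 1$ in the $\beta = \alpha + 1$ base case.

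The main obstacle is precisely the tightness of the bound at the boundary $\alpha = 3$, $\beta = \alpha + 1$: the proof succeeds only because $f(\alpha) > 1/2$, a defining feature of the factor-$\tfrac12$ geometric prior that is absent in the uniform case (where the analogous calculation leaves $\beta = \alpha + 1$ as an equilibrium, consistent with Theorem~\ref{theorem:le_uniform}). It is this heavier concentration of probability mass on $n = \alpha$ that makes the $d = 2$ duplication pay off already at $\beta = \alpha + 1$.
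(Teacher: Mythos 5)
Your proposal is correct and follows essentially the same route as the paper: the deviation is the minimal duplication $d=2$ against a fair leader election protocol, the cheater is detected only when $n=\beta$, and the decisive case $\beta=\alpha+1$ comes down to the geometric prior's concentration $f(\alpha)=5/8>1/2$ (the paper likewise isolates $\beta=\alpha+1$, computes $\ev_D[u_a]=\tfrac{5}{8}\cdot\tfrac{2}{\alpha+1}$, and dispatches $\beta>\alpha+1$ by noting the cheater's winning cases only gain probability mass). Your term-by-term bound $f(n)\tfrac{n-1}{n(n+1)}$ is in fact slightly more careful than the paper's, which compares against $\tfrac12(\tfrac1\alpha+\tfrac1{\alpha+1})$ rather than the exact $\tfrac58\tfrac1\alpha+\tfrac38\tfrac1{\alpha+1}$, though both yield the same conclusion for $\alpha\geq 3$.
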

\begin{proof}
	For any $\beta > \alpha + 1$, the only difference from the proof of Theorem~\ref{theorem:le_uniform} is that now all the cases in which the cheater gains have a higher probability, thus $a$ has an even stronger incentive to duplicate $d=2$ and increase its expected utility. It remains to show that for $\beta = \alpha + 1$, $a$ has an incentive to duplicate $d=2$. 
	
	Let $\beta = \alpha + 1$, and denote $D$ the strategy in which a cheating agent $a$ duplicates into $d = 2$ agents. If $n = \alpha$, its expected utility is $\frac{2}{\alpha + 1}$, otherwise its utility is $0$. Then we know that:
	\[
		\ev_D[u_a] = \prob[n = \alpha] \cdot \frac{2}{\alpha+1}= (\frac{1}{2} + \frac{1}{8}) \frac{2}{\alpha+1} = \frac{5}{4} \big( \frac{1}{\alpha + 1} \big)
	\]
	On the other hand, the expected utility from following the protocol is $\frac{1}{2}\big( \frac{1}{\alpha} + \frac{1}{\alpha + 1} \big)$. For any $\alpha > 2$, the expected utility from $D$ is strictly greater than from following the protocol.
\end{proof}

\subsection{Ring Partition}
\label{class_partition}

In the Ring Partition problem, the agents of an even-sized ring are partitioned
into two, equally-sized groups: group $0$ and group $1$.
An agent prefers to belong to either group $0$ or $1$.

\begin{theorem}
	\label{theorem:partition-bound}
	There is an equilibrium for Ring Partition for any $\sigma$ with any support, including an infinite support.
\end{theorem}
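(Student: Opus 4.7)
The plan is to exhibit an algorithm $\Lambda$ for Ring Partition that requires no a-priori knowledge of $n$ and to verify it is an equilibrium for every prior $\sigma$, including priors with infinite support. The algorithm has each agent exchange its ID with its two neighbors in one round, collectively establish a ring orientation from these IDs (as in Algorithm~\ref{alg:coloring_ring}), and then output its group according to an alternating assignment along that orientation. The offset of the alternation (which of the two bipartitions is labeled ``group $0$'') is fixed by a symmetric deterministic rule on the neighbors' IDs so that no single agent can bias it. Since the ring is even, any alternating assignment yields exactly $n/2$ agents in each group, so legality is immediate.

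For equilibrium I would analyze two classes of deviations by an arbitrary agent $a$. For non-duplicating deviations, any lie about $a$'s ID or output is detected by its neighbors when their views are compared locally, and triggers $\bot$; by Solution Preference the resulting utility is $0$, which is no better than following $\Lambda$. For duplicating deviations, suppose $a$ pretends to be $d$ agents, yielding a virtual ring of size $n' = n + d - 1$. If $d$ is even, then $n'$ is odd and no alternating assignment closes consistently on an odd ring, so honest agents detect the inconsistency at the closing edge and abort with $\bot$. If $d$ is odd, then $n'$ is even, and $a$'s $d$ consecutive virtual positions receive alternating labels, so exactly $(d+1)/2$ fall in the starting-parity group and $(d-1)/2$ in the other. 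The real-output balance constraint $\sum_i o_i = n/2$, together with the fact that $a$ contributes a single physical output while honest outputs are dictated by the virtual partition, forces $a$'s physical output to equal the majority-parity group of its span. Thus $a$ wins iff the starting parity of its span matches its preferred group.

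The main obstacle is showing that the probability of this last event is exactly $1/2$ from $a$'s prior perspective, irrespective of the duplication strategy. The point is to design the offset rule so that the parity of any contiguous virtual span is uniformly distributed from $a$'s viewpoint: this can be arranged by tying the offset to a symmetric function of honest neighbor IDs that $a$ must commit its duplication to before observing, so any influence $a$ exerts through fake IDs is cancelled in expectation. Under this design, a duplicating cheater's expected utility is at most $1/2$, matching the non-deviating value of $1/2$, and non-duplicating deviations give at most $0$; hence $\Lambda$ is an equilibrium for any $\sigma$. Since nothing in the argument required a bound on the support of $\sigma$, the result holds even when the support is infinite.
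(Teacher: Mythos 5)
Your high-level structure (legality from an alternating assignment on an even ring; deviations split into non-duplicating ones, parity-breaking duplications that get detected, and parity-preserving duplications whose forced output wins with probability $1/2$) is reasonable, but the mechanism you rely on to make that last probability equal $1/2$ is exactly the part you never construct, and as described it cannot work. You fix the entire partition by a \emph{deterministic} symmetric function of self-reported $id$s. First, an agent that consistently reports a false $id$ to both neighbors is undetectable in this model (the paper explicitly allows $id$ lies and assumes a large $id$ space), so ``detected when views are compared locally'' is not a valid argument against non-duplicating deviations; since the $id$ is committed before others' are seen, this particular deviation may still be harmless under the uniform-belief assumption, but not for the reason you give. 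Second, and fatally, a duplicating cheater occupying a contiguous span of the ring does \emph{not} commit to all of its fake $id$s up front: its frontier $id$s go out in round one, but the interior duplicates' $id$s are revealed only as the $id$ vector propagates around the ring, by which time the cheater has already received honest $id$s from both directions. Against any deterministic symmetric rule (e.g.\ an XOR of $id$ bits) it can therefore choose its last interior fake $id$ adaptively to force the offset, and hence its own forced group, to its preferred value with probability $1$. The phrase ``a symmetric function of honest neighbor $id$s'' cannot rescue this, because the protocol has no way to distinguish honest $id$s from fake ones.

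The paper sidesteps all of this by not deriving the partition from $id$s at all: a traversing token alternately marks agents $0/1$, thereby pairing each agent with one designated neighbor; each pair then performs a \emph{simultaneous} exchange of fresh random bits over their shared edge and each outputs the XOR of the two bits plus its mark. Each pair contributes exactly one agent to each group, so balance is automatic, and a simultaneous one-round bit exchange over a single edge cannot be biased by either endpoint, so every agent's group is an unbiasable fair coin --- including a duplicating cheater's, whose single real output remains pinned down by coin flips shared with honest neighbors (the initiator additionally checks the token's parity on return, which is the explicit detection step your ``closing edge'' argument leaves unspecified). To salvage your deterministic-offset idea you would need a commitment mechanism forcing the cheater to fix all $d$ fake $id$s before any honest $id$ reaches it; the paper's pairwise simultaneous exchange is precisely such a mechanism in local form.
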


\begin{proof}
	It is clear that an agent will not duplicate itself to change the parity of the graph,
	as that will necessarily cause an erroneous output.
	So it is enough to show an algorithm that is an equilibrium
	for even graphs, when agents have no knowledge about $n$.
	Consider the following algorithm:
	\begin{itemize}
		\item Either one arbitrary agent wakes up or we run a Wake-Up subroutine and then
		Leader Election \cite{Afek:2014:DCB:2611462.2611481}.
		Since the initiator (leader) has no effect on the output,
		both are an equilibrium in this application.
		\item The initiating agent sends a token which alternatively marks agents by 0 or 1 and also defines the direction of communication in the ring.
		\item Upon reception of the token with value $m$, an agent $a$ does one of the following:
		\begin{enumerate}
			\item If $m = 0$, send predecessor (denoted $p(a)$) a random bit $t_a$.
			\item Else, if $m = 1$, wait for 1 round and send successor (denoted $s(a)$) a random bit $t_a$.
		\end{enumerate}
		\item Upon reception of the neighbor's bit (one round after receiving the token), set 
		$$ o_a = (t_a + t_{s(a)/p(a)} + m)_{mod2} $$
		\item As the token arrives back at the initiator, it checks the token's parity. For even rings, it must be the opposite value from the value it originally sent.
	\end{itemize}
	
	This algorithm divides every pair of agents to one with output $1$ and one with output $0$, as the token value $m$ is different, thus achieving a partition.
	
	We show that it is also an equilibrium.	
	Assume an agent $a$ deviates at some round $r$. If $r$ is in the Wake-Up or Leader Election
	phase in order to be the initiator, it cannot improve its utility since choosing the starting value of the token, choosing the direction, or being first cannot increase the agent's utility. 
	If it is a deviation while the token traverses other parts of the graph, any message $a$ sends will eventually be discovered, as the real token has either already passed or will eventually pass through the "cheated" agent.
	If $a$ changes the value of the token, a randomization between two agents will be out of sync eventually at the end of the token traversal, and also the initiator will recognize that the ring does not contain an even number of agents.
	During the exchange of $t_a$ the result is independent of $a$'s choice of value for $t_a$.
	So there is no round in which $a$ can deviate from the protocol.
	
	Since agents will only duplicate into $d$ that maintains the parity of the ring, and since the duplicates' output  must be consistent with their neighbors' without the ability to affect a coin toss over a single edge, a cheater has no incentive to deviate by duplicating.
\end{proof}

\subsection{Orientation}
\label{class_orientation}

In the Orientation problem, the two ends of each edge must agree on a direction for the edge.
An agent prefers certain directions for its edges.

%Unlike Ring Partition, Orientation is defined for any graph.
%It is, however, a very local problem (radius $1$ in the \textit{LOCAL} model \cite{Linial:1987:DGA:1382440.1382990}).

\begin{theorem}
	\label{theorem:orientation-bound}
	There is an equilibrium for Orientation for any $\sigma$ with any support, including an infinite support.
\end{theorem}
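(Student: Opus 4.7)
The plan is to exhibit a purely local algorithm in which each edge is oriented by an independent two-party subprotocol between its endpoints, and then argue that no deviation -- including any duplication scheme -- can strictly increase an agent's expected utility, irrespective of $\sigma$. Since the subprotocol never references the network outside the edge, nothing in the analysis will depend on the support of $\sigma$ (finite or infinite).

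Concretely, each agent runs the following for every incident edge $(a,b)$ in parallel. In round $1$, $a$ and $b$ exchange their preferred directions for that edge. If the two preferences agree, both output that direction. If they disagree, in round $2$ they simultaneously exchange a uniformly random bit and XOR them, using the result to choose the direction. Any missing, late, or malformed message causes the receiver to output $\bot$. Because both endpoints apply the same rule to the exchanged pair, they always agree on the orientation, so the output lies in $\Theta_L$ unless a deviation is detected.

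For equilibrium without duplication, fix an agent $a$ and an edge $(a,b)$. If $b$ happens to prefer the same direction as $a$, honest play gives probability $1$ of matching $p_a$ on this edge while declaring the opposite preference forces a coin flip, giving only probability $1/2$. If $b$ prefers the opposite direction, honest play gives probability $1/2$ while lying gives probability $0$. In the coin-exchange round the two bits are sent simultaneously, so the XOR is uniform regardless of $a$'s choice. Since the per-edge interactions are mutually independent and $u_a$ depends only on the orientations of $a$'s incident edges, maximizing the success probability on every edge separately maximizes $\ev_{\Gamma}[u_a]$, and no single-step deviation is profitable.

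For duplication, suppose $a$ pretends to be $a_1,\dots,a_d$, arbitrarily partitioning its real edges among the duplicates and possibly inventing fake internal edges between them. Each real edge $(a_i,b)$ is still handled by the honest two-party subprotocol with the real neighbor $b$; the distribution of its orientation depends only on the preferences and random bits exchanged across that edge, exactly as in the non-duplicating execution. Since $u_a$ depends only on the orientations of $a$'s real edges, duplication leaves $\ev[u_a]$ unchanged, so the prior $\sigma$ never enters the comparison at all. The main subtlety is precisely this decoupling argument -- one must verify that no step of the per-edge protocol correlates different edges (in particular that the random bits on distinct edges are drawn independently), so that splitting $a$ across several virtual identities cannot amplify its leverage on any single edge.
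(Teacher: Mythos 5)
Your proof is correct and follows essentially the same route as the paper's: orientation is decided edge-by-edge via a simultaneous bit exchange whose XOR is uniform whenever one endpoint is honest, and duplication is harmless because it creates no new edges between the cheater and the real graph, so nothing depends on $\sigma$. The only difference is that your protocol first tries to honor agreeing preferences (the paper's version always flips the coin and keys the direction to the endpoint $id$s, making the outcome uniform regardless of any action), which adds one extra proof obligation --- that truthfully reporting the per-edge preference is optimal ($3/4$ versus $1/4$ success probability per edge under the uniform prior on the neighbor's preference) --- and you discharge it correctly.
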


\begin{proof}
	We show a simple algorithm and prove that it is an equilibrium for any prior $\sigma$.
	Assuming all agents start at the same round (otherwise run Wake-Up), consider the following algorithm:
	\begin{itemize}
		\item Each agent simultaneously send a random number $(0,1)$ and its $id$ on each of its edges.
		\item For each edge, XOR the bit you sent and the bit received over that edge
		\item If the result is 1, the edge is directed towards the agent with the higher $id$, otherwise it is directed towards the lower $id$.
		\item Every agent outputs the list of pairs with $id$ and direction for each of its neighbors. 
	\end{itemize}
	
	Since an agent's utility is defined over its personal output, Solution Preference inhibits agents to output a correct set of pairs, so a cheater may only influence the direction of the edges.
	Since duplication does not create any new edges between the cheater and the original graph, and the orientation is decided over each edge independently, it does not effect any agent's utility. Other than that, randomizing a single bit over an edge at the same round is in equilibrium. So the algorithm is an equilibrium regardless of $\sigma$.
\end{proof}

\section{Discussion}
\label{section_discussion}

In this paper we have shown that the assumption that $n$ is a-priori known,
commonly made in previous works, has a critical role in the possibility of equilibrium.
In realistic scenarios, the exact size of the network may not be known to all members,
or only estimates on the exact size are known in advance.
In such networks, the use of duplication gives an agent power
to affect the outcome of most algorithms, and in some cases makes equilibrium impossible.
%In this work we did not identify any problem that requires exact knowledge of $n$ for equilibrium. Even in Leader Election, equilibrium is possible as long as $n$ is known to be in a margin of 2.  

When there is a finite prior for $n$,
our proofs in Section~\ref{section_approx} show that the initial knowledge required for equilibrium
depends on the balance between two factors:
the amount of duplications necessary to increase
an agent's expected utility,
and the probability that the cheater is caught duplicating.
In order for an agent to have an incentive to duplicate itself,
an undetected duplication needs to be more profitable than following the algorithm
while also involving low risk of being caught.

Our results suggest several open directions that may be of interest:
\begin{enumerate}
	\item Finding an equilibrium for Knowledge Sharing in a general graph when agents duplicate into any $d \leq n$. This would be the missing piece that, along with our impossibility proof in Theorem~\ref{theorem:KSgeneral},
	would prove the bounds in Table~\ref{table:ks_bounds} hold for general graphs.

	\item Algorithms and impossibility results for other problems,
	as well as bounds for various priors.
	
%	\item Finding a problem that is $\alpha$-bound,
%	i.e., has an equilibrium only when $n$ is known exactly.
	
	%\item Finding a characteristic that separates problems that can be solved without \emph{any} knowledge %about $n$ from ones in which at least some bounds must be a-priori known.
	
	\item Finding more problems that have equilibrium when the prior of $n$ has infinite support,
	in any graph (unlike Partition)
	and a non-constant radius in the LOCAL model (unlike Orientation).
	
	\item Exploring the effects of initial knowledge of $n$
	in an asynchronous setting.
\end{enumerate}

\section{Acknowledgment}
We would like to thank Doron Mukhtar for showing us
the ring partition problem and showing an equilibrium for it for any $\sigma$,
when we thought such problems do not exist,
and to Michal Feldman and Amos Fiat for helpful discussions.

\clearpage
%\nocite{*}
\bibliographystyle{abbrv}

\end{document}